\newcolumntype{x}[1]{>{\centering\arraybackslash\hspace{0pt}}p{#1}}
\newcommand{\Z}{\mathbb{Z}}
\newcommand{\N}{\mathbb{N}}
\DeclareMathOperator{\dom}{dom}
\DeclareMathOperator\level{level}
\newcommand{\uo}{{\pmb\rightarrowtriangle}} 
\newcommand{\uc}{{\pmb\leftarrowtriangle}}
\newcommand{\uor}{\gray\rightarrow}
\newcommand{\ucr}{\gray\leftarrow}
\newcommand{\ulc}{\ulcorner}
\newcommand{\urc}{\urcorner}
\newcommand{\llc}{\llcorner}
\newcommand{\lrc}{\lrcorner}
\newcommand{\sh}{\,|\,}
\newcommand{\sv}{\textbf{---}}
\newcommand{\gray}[1]{\textcolor{lipicsBulletGray}{#1}}
\newcommand{\noop}[1]{\pmb{#1}}
\newcommand{\dalphabet}{\Sigma \times \{\uo,\uc\}}
\newcommand{\DecisionProblem}[2]{\begin{description}
\item [Input:] #1
\item [Output:] #2
\end{description}}
    \title{The aperiodic Domino problem in higher dimension}
    \author{Antonin {Callard}}{Université Paris-Saclay,  ENS Paris-Saclay, Département Informatique, 91190 Gif-sur-Yvette, France \and \url{https://www.acallard.net/} }{contact@acallard.net}{https://orcid.org/0000-0002-4673-4881}{}
    \author{Benjamin {Hellouin de Menibus}\footnote{Corresponding author}}{Université Paris-Saclay, CNRS, Laboratoire Interdisciplinaire des Sciences du Numérique, 91400 Orsay, France \and \url{https://www.lisn.upsaclay.fr/~hellouin/}}{hellouin@lisn.fr}{https://orcid.org/0000-0001-5194-929X}{}
\authorrunning{A. Callard and B. Hellouin de Menibus}
\keywords{Subshift, periodicity, aperiodicity, domino problem, subshift of finite type, sofic subshift, effective subshift, tilings, computability}
\begin{document}
\maketitle

\begin{abstract}
The classical Domino problem asks whether there exists a tiling in which none of the forbidden patterns given as input appear. In this paper, we consider the aperiodic version of the Domino problem: given as input a family of forbidden patterns, does it allow an aperiodic tiling? The input may correspond to a subshift of finite type, a sofic subshift or an effective subshift.

\cite{2018-GHV} proved that this problem is co-recursively enumerable ($\Pi_0^1$-complete) in dimension 2 for geometrical reasons. We show that it is much harder, namely analytic ($\Sigma_1^1$-complete), in higher dimension: $d \geq 4$ in the finite type case, $d \geq 3$ for sofic and effective subshifts. The reduction uses a subshift embedding universal computation and two additional dimensions to control periodicity.

This complexity jump is surprising for two reasons: first, it separates 2- and 3-dimensional subshifts, whereas most subshift properties are the same in dimension 2 and higher; second, it is unexpectedly large. 

\end{abstract}
\newpage

\section{Introduction}

Subshifts are sets of colorings (or \emph{configurations}) defined by a family of forbidden patterns. The seminal computational problem on multidimensional subshifts is the \emph{Domino Problem}: given a subshift of finite type (SFT), does it contain a configuration? It was proved undecidable on $\Z^2$ in \cite{1966-Berger,1971-Robinson} from the construction of \emph{aperiodic SFTs} (SFTs which contain only (strongly) aperiodic colorings, i.e. colorings with no non-zero period) in which universal computation is embedded. Many similar undecidability results used different SFTs and embeddings to control the structure and properties of their configurations \cite{2008-Hochman,2010-DRS,2013-AS,2017-Westrick,2010-DRS} or to characterise the set of possible values of some parameters by computability conditions \cite{2010-HM,2011-Meyerovitch,2021-CV}. These results all rely on the existence of purely aperiodic SFTs on $\Z^d$ for $d \geq 2$ (see \cite[Section 1.2]{2020-UndecidabilityDomino} for more details), and show that multidimensional SFTs can be considered as geometrical computational models.

In contrast, topological or geometrical restrictions may lower the ``natural'' complexity of a problem (compare e.g. \cite{2010-HM} with \cite{1997-Friedland} or \cite{2015-PS}) by breaking our ability to embed computation. In particular, finding the border where the difficulty jump occurs gives a fine understanding of the effect of the restriction \cite{2019-GH}.\medskip

Given the importance of aperiodicity for computation embedding, it is natural to ask the aperiodic version of the Domino problem (\textbf{AD}): \emph{given as input a subshift $X$, does it contain an aperiodic coloring?} It is not difficult to see that this problem is harder than the Domino problem, i.e. co-r.e.($\Pi_1^0$)-hard in dimension 2 and higher; however, the natural upper bound is much higher, outside the arithmetical hierarchy.

This question was (to the best of our knowledge) first explored in \cite{2018-GHV}: the authors proved that \textbf{AD} is $\Pi_1^0$-complete for $\Z^2$ subshifts\footnote{on SFTs, but as \cite[Theorem~1]{2018-GHV} applies to any $\Z^2$ subshift, the result also holds for $\Z^2$ effective subshifts.}. It is an example of problem whose computational complexity is low because of geometrical reasons specific to the two-dimensional case: starting from an aperiodic configuration, we can regroup breaks of periods into concentric balls whose size is controlled by a computable function (\cite[Theorem 1]{2018-GHV}). \medskip

In this paper, we study the computational complexity of this problem in higher dimension, where this geometrical property no longer holds (see \cite[Section 4]{2018-GHV} for a counter example). We build an embedding for universal computation that proves that this problem is in a much higher undecidability class -- $\Sigma_1^1$-complete, its natural upper bound -- in sofic subshifts for $d\geq 3$ and in subshifts of finite type for $d\geq 4$.

Our paper is structured as follows.
\begin{itemize}
\item In \Cref{sec:definitions-notations}, we provide definitions for subshifts and the relevant complexity classes;
\item In \Cref{sec:sofic-highly-undecidable}, we prove that \textbf{AD} is $\Sigma_1^1$-complete on $\Z^d$ ($d \geq 3$) sofic subshifts;
\item In \Cref{sec:sft-highly-undecidable}, we adapt the previous proof to $\Z^d$ ($d \geq 4$) subshifts of finite type;
\item In \Cref{sec:low-complexity}, we make a side remark relating the existence of an aperiodic configuration in SFTs with their complexity.
\end{itemize}

We summarize the complexity of \textbf{AD} in the following table (new results are highlighted):

\begin{center}
  \begin{tabular}{c|x{2cm}|x{2cm}|x{2cm}}
    Dimension / type & finite type & sofic & effective\\
    \hline
    2D$\;$ & $\Pi_1^0$-complete& $\Pi_1^0$-complete& $\Pi_1^0$-complete\\
    3D$\;$ & open & \cellcolor{blue!15!white}$\Sigma_1^1$-complete & \cellcolor{blue!15!white}$\Sigma_1^1$-complete\\
    $\;$4D+ & \cellcolor{blue!15!white}$\Sigma_1^1$-complete & \cellcolor{blue!15!white}$\Sigma_1^1$-complete & \cellcolor{blue!15!white}$\Sigma_1^1$-complete \\
  \end{tabular}
\end{center}

Considering the effect of the dimension on the difficulty of \textbf{AD}, we find a border between the dimensions where the complexity of the problem is lowered by geometric properties and the dimensions where computability considerations dominate. 

For sofic and effective subshifts, this border lies between dimensions 2 and 3. For SFTs on $\Z^3$, we conjecture that \textbf{AD} is in the arithmetical hierarchy for reasons that are specific to SFTs, and provide a few pointers in conclusion. This would be a candidate for a dimension-separating property between 3 and 4 dimensional SFTs. In both cases, we do not know of any other natural problem with a complexity jump in such high dimensions.

\section{Definitions and notations}\label{sec:definitions-notations}
\subsection{Subshifts}

For a more detailed introduction, we refer the reader to \cite[Chapter 9]{2016-CombinatoricsWordsandSD}.

Let $\Sigma$ be a finite alphabet of colors and $d$ a dimension. A \emph{configuration} is a coloring $c : \Z^d \mapsto \Sigma$, and the value of $c$ at position $i$ is denoted $c_i$. A \emph{pattern} is a coloring $w : D \mapsto \Sigma$ of a finite domain $D = \dom(w) \subseteq \Z^2$. We say that a pattern~$w$ appears in a configuration $x$ and write $w \sqsubseteq x$ if $w_j = x_{i+j}$ for some $i \in \Z^d$ and all $j \in D$. Given a configuration $x$ and a vector $t \in \Z^d$, denote $\sigma^t(x)$ the shift of $x$ by $t$: for any $i \in \Z^d, \sigma^t(x)_i = x_{i-t}$.

\begin{definition}[Periodicity]\label{period-subshifts}
  \begin{enumerate}
  \item In a configuration $x \in \Sigma^{\Z^d}$, a vector $p \in \Z^d$ is \emph{broken at position $i$} if $x_{i+p} \neq x_{i}$.
  \item A configuration $x \in \Sigma^{\Z^d}$ is (strongly) \emph{aperiodic} if every vector $p \in \Z^d$ is broken in $x$.
  \end{enumerate}
\end{definition}

In the following definition, $\Sigma$ is equipped with the discrete topology and $\Sigma^{\Z^d}$ with the product topology. $\Sigma^{\Z^d}$ is then a Cantor space.

\begin{definition}[Subshifts]\label{def:subshifts}
  A \emph{subshift} is a closed and $\sigma$-invariant subset of $\mathbb Z^d$. Equivalently, there is a family of forbidden patterns~$\mathcal{F}$ such that
    \[ X = X_{\mathcal{F}} := \left \{ x \in \Sigma^{\Z^d} : \forall w \in \mathcal{F}, w \not \sqsubseteq x \right \}\]
\end{definition}

Two distinct families of forbidden patterns may define the same subshift. 
\begin{definition}[Classes of subshifts]\label{def:subshifts-classes}
A subshift $Y \subseteq \Sigma^{\Z^d}$ is:
  \begin{enumerate}
  \item \emph{of finite type} (\emph{SFT}) if it can be defined by a finite family of forbidden patterns.
  \item \emph{sofic} if there exists an SFT $X \subseteq {\Sigma'}^{\Z^d}$ and a projection $\pi : \Sigma' \mapsto \Sigma$  such that $Y = \pi(X)$.
  \item \emph{effective} if it can be defined by a recursively enumerable family of forbidden patterns.
  \end{enumerate}
\end{definition}

SFTs are of course sofic and sofic subshifts are effective. On the other direction, effective subshifts are projections of higher-dimensional sofic subshifts; this is a consequence of \cite{2008-Hochman}, later improved in the subshift case in \cite{2010-DRS,2013-AS}. More precisely, $X^\uparrow \subseteq \Sigma^{\Z^{d+k}}$ is a $(d+k)$-dimensional \emph{lift} of a subshift $X \subseteq \Sigma^{\Z^d}$ if its configurations are configurations of $X$ repeated along the $k$ additional dimensions. Then:

\begin{theorem}[\cite{2010-DRS,2013-AS}]\label{thold:effective-sofic}
  For any $\Z^d$ effective subshift $X$, its $(d+1)$-dimensional lifts are sofic.
\end{theorem}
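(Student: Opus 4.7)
The plan is to invoke a \emph{self-simulating} construction in the style of \cite{2010-DRS,2013-AS}. I would build an SFT $Y$ on $\Z^{d+1}$ over a larger alphabet $\Sigma' \supseteq \Sigma$, together with a projection $\pi : \Sigma' \to \Sigma$ that erases all auxiliary symbols, so that $\pi(Y) = X^\uparrow$.

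First, in the extra $(d+1)$-th direction I would force an aperiodic hierarchical structure: a tower of nested \emph{macrocells} of sizes $N_1 \ll N_2 \ll \dots$ (e.g.\ $N_k = N_0^{2^k}$ in the fixed-point version of \cite{2010-DRS}), enforced at every scale by local SFT rules. A designated \emph{data layer} carries a symbol of $\Sigma$ at each cell; additional local rules force this layer to be constant along the $(d+1)$-th coordinate, so that it encodes a candidate $d$-dimensional configuration $x \in \Sigma^{\Z^d}$.

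Each macrocell at level $k$ would then embed a simulation of a fixed Turing machine $M$ that enumerates the recursively enumerable family $\mathcal{F}$ of forbidden patterns defining $X$. The nested structure gives this simulation access to a tape of length $\sim N_k$ and a time budget polynomial in $N_k$, so that for $k$ large enough every pattern of $\mathcal{F}$ below a given size is eventually produced. Each produced pattern $p$ is broadcast through the macrocell's communication channels and compared, via local rules, against every translate of $p$ fitting inside the macrocell's horizontal footprint of the data layer; an occurrence triggers a forbidden local configuration. Because the horizontal footprints of macrocells cover $\Z^d$ at arbitrarily large scales, every pattern of $\mathcal{F}$ is eventually tested at every position, and conversely any $x \in X$ lifted vertically is admitted by filling in the hierarchy and computation honestly.

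The main difficulty is orchestrating this self-simulation with a \emph{finite} set of local rules: the same finite alphabet must simultaneously force the hierarchy, run $M$ at every level, transfer computation state between scales, pipe enumerated patterns out of the computation zone, and block any tiling whenever one of them appears in the data layer. The extra dimension is precisely what makes this possible, as it cleanly separates the hierarchical computation apparatus from the data layer; in dimension $d$ alone the two would collide. Once this machinery is in place, the configurations of $\pi(Y)$ are exactly the vertical lifts of configurations $x \in \Sigma^{\Z^d}$ avoiding every pattern of $\mathcal{F}$, i.e.\ the configurations of $X^\uparrow$.
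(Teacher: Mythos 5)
The paper does not reprove this statement: \Cref{thold:effective-sofic} is cited as a black box from \cite{2010-DRS,2013-AS}, which is exactly how the paper uses it in \Cref{sec:sofic-y3,sec:2d-toeplitz}. Your sketch correctly outlines the Durand--Romashchenko--Shen fixed-point self-simulation route to that result (hierarchy of macrocells in the extra dimension, a data layer forced constant along that axis, a machine enumerating $\mathcal{F}$ simulated at every scale, broadcast-and-compare against translates, larger macrocells covering $\Z^d$ at all scales), and you are right that the one extra dimension is precisely what lets the computation layer coexist with the data layer. Two remarks for completeness: (i) Aubrun--Sablik \cite{2013-AS} reach the same conclusion by a substitution-based (Robinson-like) hierarchy rather than a fixed-point construction, so there are genuinely two independent roads to this theorem, and your sketch follows only the first; (ii) the part you flag as ``the main difficulty'' --- routing computation state and enumerated patterns across scales with finitely many local rules --- is exactly where all the work of those papers lives, so your proposal should be read as an accurate roadmap rather than a self-contained argument, which matches the paper's own treatment of the statement as an external citation.
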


\subsection{Hierarchy of undecidability}

Many-one reductions define a preorder on decision problems (``$P_1$ is easier than $P_2$''), so we can define hierarchies according to ``how far'' a problem is from being computable. 

\paragraph*{Arithmetical hierarchy}

Starting from recursively enumerable ($\Sigma_1^0$) and co-recursively enumerable ($\Pi_1^0$) problems, the arithmetical hierarchy progressively defines higher levels of undecidability.

\begin{definition}[Arithmetical hierarchy]\label{def:arithm-hierarchy}
  For a decision problem $P: \N\mapsto\{0,1\}$ and $m\geq 1$,
  \begin{enumerate}
  \item $P \in \Sigma_m^0$ if there is a computable relation $R(n,k_1,...,k_m)$ such that
    \[ P(n) = 1 \Leftrightarrow \exists k_1, \forall k_2,\exists k_3,\dots R(n,k_1, \dots,k_m).\]
  \item $P \in \Pi_m^0$ if this definition holds when swapping $\forall$ and $\exists$ quantifiers.
  \end{enumerate}

$P$ is \emph{arithmetical} if it belongs to a level of this hierarchy.
\end{definition}

As $\Sigma_m^0 \cup \Pi_m^0 \subseteq \Sigma_{m+1}^0 \cap \Pi_{m+1}^0$, this indeed defines a hierarchy. For more details, we refer the reader to \cite[Chapter 4]{2016-TuringComputability}.

\paragraph*{Analytical hierarchy}

Above the arithmetical hierarchy, the analytical hierarchy allows for second-order quantifications on sets. Here we need only the first level.

\begin{definition}[Class $\Sigma_1^1$]\label{def:sigma-1-1}
  A decision problem $P : \N\mapsto \{0,1\}$ is $\Sigma_1^1$ if there exists an \emph{arithmetical} relation $R$ such that
  \[ P(n) = 1 \Leftrightarrow \exists f \in 2^\N, R^f(n) \]
  in which $R^f$ denotes the relation $R$ with $f$ given as an oracle.
\end{definition}

All arithmetical sets are $\Sigma_1^1$. In terms of computational power, $\Sigma_1^1$ sets are (a lot) harder than arithmetical sets: to make an analogy between computability and topology, if $\Sigma_1^0$ sets correspond to the open sets, then $\Sigma_1^1$ sets are not even Borel. For more details, see \cite[Chapter IV.2]{1989-ClassicalRecursionTheory}. A typical example of a \emph{$\Sigma_1^1$-complete problem} is the following:

\begin{theorem}[State Recurrence {\cite[Corollary 6.2]{1986-Harel}}]\label{thold:state-recurrence}
  The problem of \textbf{State Recurrence (SR)}:\DecisionProblem{A nondeterministic Turing machine (NTM) $\mathcal{M}$, and one of its states $q_0$.}{Is there a run of $\mathcal{M}$ on the empty input $\varepsilon$ in which $q_0$ is visited infinitely often?}is a $\Sigma_1^1$-complete problem.
\end{theorem}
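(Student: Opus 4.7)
The plan is to establish both directions separately: membership in $\Sigma_1^1$, then $\Sigma_1^1$-hardness.

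For the upper bound, I would encode a run of $\mathcal{M}$ on the empty input as an infinite sequence of finite Turing machine configurations (state, head position, finitely supported tape content), themselves coded as naturals, so that the whole run becomes an element of $2^\N$ via a standard pairing. The property ``$f$ codes a valid run of $\mathcal{M}$ that visits $q_0$ infinitely often'' decomposes into three arithmetical sub-clauses: $f_0$ is the initial configuration on the empty input; every consecutive pair $(f_n, f_{n+1})$ satisfies the nondeterministic transition relation of $\mathcal{M}$; and $\forall n \, \exists m > n$ such that the state component of $f_m$ equals $q_0$. The first two are $\Pi_1^0$ in $f$ and the third is $\Pi_2^0$, so the conjunction is arithmetical, and existentially quantifying over $f \in 2^\N$ places \textbf{SR} in $\Sigma_1^1$.

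For hardness, I would invoke the Kleene normal form characterization: a set $A \subseteq \N$ is $\Sigma_1^1$ iff there is a uniformly computable family of trees $T_n \subseteq \N^{<\N}$ such that $n \in A$ iff $T_n$ has an infinite branch. Given $n$, I would produce (computably) an NTM $\mathcal{M}_n$ that maintains on its tape a word $\sigma \in \N^{<\N}$, initially empty, and runs the following loop: nondeterministically guess a natural $k$ bit by bit, deterministically check whether $\sigma \cdot k$ still belongs to $T_n$ (a computable test), enter the designated state $q_0$ and iterate if so, and otherwise fall into a dedicated sink loop that avoids $q_0$ forever. The runs of $\mathcal{M}_n$ visiting $q_0$ infinitely often are then in bijection with the infinite branches of $T_n$, yielding $n \in A$ iff $(\mathcal{M}_n, q_0)$ is a positive instance of \textbf{SR}.

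The main technical delicacy lies in the guessing phase. Since $T_n$ may branch infinitely, simulating a nondeterministic choice of $k \in \N$ requires arbitrarily many steps between two consecutive visits of $q_0$, and one must rule out ``cheating'' runs that keep guessing bits forever without ever committing to an extension. This is handled by making $q_0$ reachable only \emph{after} the finite verification step and by keeping the bit-guessing mode in states disjoint from $q_0$; any run that never terminates a guess then fails to revisit $q_0$ from some point on, so only runs that genuinely produce an infinite branch of $T_n$ visit $q_0$ infinitely often. Putting together the two bounds gives $\Sigma_1^1$-completeness.
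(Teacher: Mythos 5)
The paper does not actually prove this theorem: it is imported as a black box, cited from Harel's 1986 paper (Corollary 6.2), and used as the source problem for the two reductions in Sections~3 and~4. Your argument is correct and reconstructs the standard proof from the literature. For the upper bound, encoding runs as elements of $2^\N$ and observing that ``valid run'' and ``$q_0$ recurs'' are respectively $\Pi_1^0$ and $\Pi_2^0$ in the oracle is exactly what is needed, and matches the style of \cref{th:existence-of-an-aperiodic-configuration-sigma-1-1} in the paper. For hardness, reducing from ill-foundedness of uniformly recursive trees on $\N^{<\N}$ (the $\Sigma_1^1$ normal form) and having the NTM guess a branch digit by digit, passing through $q_0$ exactly when an extension is verified, is the classical route; this is also the mechanism underlying Harel's treatment of recurring tiling and recurring Turing machine problems. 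You correctly identify and dispatch the one real subtlety, namely that a run can stall forever inside a single digit-guessing phase: since $q_0$ is reachable only after a verified commitment, such runs visit $q_0$ only finitely often and hence cannot create false positives, while genuine infinite branches yield bona fide recurring runs. In short, your proof is sound and is a faithful reconstruction of the argument that the paper chose to cite rather than reproduce.
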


\subsection{The aperiodic Domino (\textbf{AD}) problem and its complexity}\label{sec:general-problem}
\begin{definition}[Aperiodic Domino problem (\textbf{AD})]\label{def:eac}~
  \DecisionProblem{An effective family of $d$-dimensional patterns.}{Is there an aperiodic configuration in the effective subshift $X_\mathcal{F}$?}
\end{definition}
  We consider variations of \textbf{AD} depending on the type of input subshift  (SFT, sofic, effective). There are natural lower and upper bounds on the complexity of \textbf{AD} that do not depend on the input type:

\begin{proposition}\label{lemma:pi-hard}
  \textbf{AD} is $\Pi_1^0$-hard for $\Z^d$ subshifts ($d \geq 2$).
\end{proposition}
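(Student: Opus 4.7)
The plan is to reduce the classical (non-aperiodic) Domino Problem \textbf{DP} -- ``given an SFT, is it non-empty?'' -- to \textbf{AD}. Recall that \textbf{DP} is $\Pi_1^0$-complete on $\Z^d$ SFTs for $d \geq 2$: emptiness is r.e.\ by compactness (if the SFT is empty, some finite window admits no valid pattern, which is enumerable), and undecidability is Berger's theorem. Since every SFT is sofic and every sofic subshift is effective, a reduction targeting SFTs gives hardness in all three classes at once.

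The construction is a one-line product. Fix, independently of the input, a strongly aperiodic SFT $R \subseteq \Sigma_R^{\Z^d}$; as recalled in the introduction, such an $R$ exists for every $d \geq 2$. Given an input SFT $X \subseteq \Sigma^{\Z^d}$, I output the product SFT $X' := X \times R$ on the alphabet $\Sigma \times \Sigma_R$, defined by the union of the (finite) forbidden patterns of $X$ and of $R$ read on each coordinate. This is computable from $X$ and is still an SFT.

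For correctness, observe first that if $X = \emptyset$ then $X' = \emptyset$, so it contains no configuration, let alone an aperiodic one. Conversely, if $x \in X$, pick any $r \in R$; since $R$ is strongly aperiodic, every nonzero $p \in \Z^d$ is broken in $r$ at some position, and therefore $p$ is also broken in the pair $(x,r)$ at that position. Hence $(x,r) \in X'$ is aperiodic and $X'$ witnesses a ``yes''-instance of \textbf{AD}.

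This yields a many-one reduction \textbf{DP} $\leq_m$ \textbf{AD} on SFTs, hence on sofic and on effective subshifts as well, giving $\Pi_1^0$-hardness in each case. There is no real obstacle: the only non-trivial ingredient is the existence of the fixed aperiodic SFT $R$ in every dimension $d \geq 2$, which is classical and is the same ingredient used throughout the rest of the paper.
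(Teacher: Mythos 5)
Your proposal is correct and is essentially the paper's own proof: both reduce the Domino problem to \textbf{AD} by taking the product of the input subshift with a fixed strongly aperiodic SFT, observing that the product is nonempty iff the input is, and that any configuration of the product is aperiodic. The only cosmetic difference is that the paper states the product construction uniformly for all three input types, whereas you argue on SFTs and invoke the inclusions SFT $\subseteq$ sofic $\subseteq$ effective.
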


\begin{proof}
  We reduce the Domino problem to \textbf{AD}. Let $Y$ be a $\Z^d$-SFT with only aperiodic configurations (see e.g. \cite{1971-Robinson}).
For any $\Z^d$ subshift $X$, the cartesian product $X \times Y$ has the same type (SFT, sofic, effective), has only aperiodic configurations, and is non-empty if and only if $X$ is non-empty.
\end{proof}

\begin{proposition}\label{th:existence-of-an-aperiodic-configuration-sigma-1-1}
  \textbf{AD} is a $\Sigma_1^1$ problem for $\Z^d$ subshifts.
\end{proposition}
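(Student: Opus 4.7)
The plan is to exhibit, for a given effective family $\mathcal{F}$ of forbidden patterns, a $\Sigma_1^1$ formula equivalent to ``$X_\mathcal{F}$ contains an aperiodic configuration''. The existential second-order quantifier will range over (encodings of) candidate configurations $c \in \Sigma^{\Z^d}$, and the remaining predicate will be arithmetical with $c$ as oracle.

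First, I would fix a computable bijection $\N \to \Z^d$ and encode a configuration $c \in \Sigma^{\Z^d}$ as an element $f_c \in 2^\N$ (since $\Sigma$ is finite, this is a standard computable encoding). The oracle $f_c$ gives access to $c_i$ uniformly in $i \in \Z^d$ in finite time. Expressing ``$c$ is an element of $X_\mathcal{F}$'' then becomes: for every index $n$ (in a fixed recursive enumeration of $\mathcal{F}$) and every translation $i \in \Z^d$, the pattern $w_n$ does not appear in $c$ at position $i$. The inner check is a finite conjunction of equalities of the form $c_{i+j} \neq (w_n)_j$ for $j \in \dom(w_n)$, computable with oracle $f_c$; hence this whole statement is $\Pi_1^0$ relative to $f_c$. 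Expressing ``$c$ is aperiodic'' unfolds directly from \Cref{period-subshifts} as $\forall p \in \Z^d \setminus \{0\},\, \exists i \in \Z^d,\, c_{i+p} \neq c_i$, which is $\Pi_2^0$ relative to $f_c$.

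Taking the conjunction of both predicates gives an arithmetical relation $R^{f_c}$ with oracle $f_c$, so
\[
X_\mathcal{F} \text{ contains an aperiodic configuration} \;\Longleftrightarrow\; \exists f \in 2^\N,\; R^f
\]
which is precisely the form required by \Cref{def:sigma-1-1}. This gives the $\Sigma_1^1$ upper bound uniformly in the input, regardless of whether $\mathcal{F}$ is finite, defines a sofic subshift, or merely an effective one (only recursive enumerability of $\mathcal{F}$ is used).

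There is essentially no obstacle beyond bookkeeping: the only point that requires a little care is checking that the enumeration of $\mathcal{F}$ together with the encoding of $c$ keeps every inner predicate decidable relative to the oracle, so that the quantifier prefix alone determines the arithmetical level. The upper bound of $\Sigma_1^1$ is, as the introduction notes, the ``natural'' one, and the rest of the paper is devoted to showing it is also tight in sufficiently high dimension.
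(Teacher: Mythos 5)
Your proposal is correct and follows essentially the same route as the paper: encode the configuration via a computable bijection $\N\to\Z^d$ as the second-order witness, observe that membership in $X_\mathcal{F}$ is $\Pi_1^0$ and aperiodicity is $\Pi_2^0$ relative to that oracle, and conclude the whole statement is of the form $\exists f\in 2^\N,\, R^f$ with $R$ arithmetical. The only (harmless) cosmetic difference is that you explicitly exclude $p=0$ in the aperiodicity quantifier, which the paper leaves implicit.
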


\begin{proof}
  Let $\mathcal{F}$ be the effective family of forbidden patterns given as input. The existence of an aperiodic configuration can be written as:
  \[\exists x \in \Sigma^{\Z^d}, x \in X_{\mathcal{F}} \text{ and } x \text{ is aperiodic.}\]
  Taking any computable encoding between $\Sigma^{\Z^d}$ and $2^\N$, we can see that the first (existential) quantifier is of second order and can be written as a quantifier on $2^\N$. The rest of the expression is a $\Pi_2^0$ relation, and in particular arithmetical ($x$ being given as oracle):
  \begin{itemize}
    \item $x \in X_\mathcal{F} \Leftrightarrow \forall w \in \mathcal{F}, \forall i \in \Z^d, x_{|i + \dom(w)} \neq w$;
    \item $x$ is aperiodic $\Leftrightarrow\forall p \in \Z^d, \exists i \in \Z^d, x_{i} \neq x_{i+p}$.\qedhere
  \end{itemize}
\end{proof}

\section{\texorpdfstring{$\Sigma_1^1$}{Sigma\_{}1\^{}1}-completeness for \texorpdfstring{$\Z^d$}{Z\^{}d} sofic and effective subshifts, \texorpdfstring{$d\geq 3$}{d>=3}}
\label{sec:sofic-highly-undecidable}

\begin{theorem}\label{th:3d-sofic}
\textbf{AD} for $\Z^d$ sofic subshifts, $d\geq 3$, is a $\Sigma_1^1$-complete problem.
\end{theorem}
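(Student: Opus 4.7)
The $\Sigma_1^1$ upper bound is \Cref{th:existence-of-an-aperiodic-configuration-sigma-1-1}. For the matching lower bound, the plan is to reduce from State Recurrence (\Cref{thold:state-recurrence}): given an NTM $\mathcal M$ and one of its states $q_0$, the aim is to build a $\Z^3$ sofic subshift $S_{\mathcal M, q_0}$ that contains an aperiodic configuration if and only if $\mathcal M$ admits a run on the empty input visiting $q_0$ infinitely often. The cases $d \ge 4$ will then follow by a direct adaptation of the construction, using the extra dimensions to replicate the period-breaking mechanism.

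The construction would start from a $\Z^2$ effective subshift whose configurations are spacetime diagrams of nondeterministic runs of $\mathcal M$ on the empty input (one axis for the tape, the other for time, with transitions enforced by a recursively enumerable family of forbidden patterns), and in which a distinguished symbol is placed on every cell where the head is in state $q_0$. Lifting this subshift to $\Z^3$ along a new axis $e_3$ yields a sofic subshift by \Cref{thold:effective-sofic}, but every lifted configuration is constant in $e_3$ and thus never aperiodic. On top of this I would superimpose further sofic layers whose sole effect is to allow the configuration to differ along $e_3$ precisely above the $q_0$-marked cells of the computation: at each such cell, a local SFT rule opens an extra degree of freedom in the $e_3$-direction (for instance, an unconstrained $\{0,1\}$-column), while the rest of the $e_3$-columns are forced to be constant. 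Intersecting the lift with these added SFT constraints preserves soficness.

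The forward direction of the reduction is straightforward: picking a run of $\mathcal M$ that visits $q_0$ infinitely often gives infinitely many openings along the time axis, and these independent freedoms can be populated so that every non-zero vector of $\Z^3$ is broken in some layer of the configuration, producing an aperiodic element of $S_{\mathcal M, q_0}$. The reverse direction is the main obstacle: if no run of $\mathcal M$ visits $q_0$ infinitely often, one needs to deduce that every configuration of $S_{\mathcal M, q_0}$ has only finitely many openings, and then that this finiteness forces a genuine global period. A priori the local sofic rules do not compel a configuration to depict a single bi-infinite run of $\mathcal M$, so ``orphan'' $q_0$-marks together with locally aperiodic openings could a priori spoil both the finiteness and the subsequent periodicity. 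The standard remedy, which I would follow, is to couple the computation layer with a hierarchical aperiodic background \`a la \cite{1971-Robinson}: such a background rigidifies every configuration, forcing it to contain a unique computation origin and to depict exactly one global run of $\mathcal M$, and is known to be sofic in $\Z^3$. Once this rigidification is in place, the finiteness of $q_0$-marks propagates to an $e_3$-periodic tail of the configuration, and, combined with the $e_3$-constancy of the computation layer, promotes to a global $e_3$-period, thereby closing the reduction by routine bookkeeping.
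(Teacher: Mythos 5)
Your high-level plan (upper bound from \Cref{th:existence-of-an-aperiodic-configuration-sigma-1-1}, lower bound by reducing State Recurrence, building a sofic $\Z^3$ subshift) matches the paper's, and your observation that one can lift a $\Z^2$ effective subshift to a $\Z^3$ sofic one via \Cref{thold:effective-sofic} is correct. However, the heart of the reduction is missing, and the step you flag as ``the main obstacle'' is misdiagnosed and, as described, actually false.

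The fatal issue is the reverse direction with ``unconstrained $\{0,1\}$-columns'' as openings. Suppose the machine visits $q_0$ exactly once. Your construction then creates at least one unconstrained $e_3$-column, which can be filled with any aperiodic bi-infinite word. Since the computation and background layers already break every vector $(a,b,c)$ with $(a,b)\neq(0,0)$, and a single aperiodic opening column breaks every $(0,0,c)$ with $c\neq 0$, the configuration is aperiodic despite $q_0$ being visited only finitely often. Your claim that ``the finiteness of $q_0$-marks propagates to an $e_3$-periodic tail\dots and promotes to a global $e_3$-period'' is simply not true for unconstrained columns: there is nothing to promote. The Robinson rigidification you invoke addresses an entirely different concern (orphan marks, the uniqueness of the depicted run); it does nothing to force the opening columns to be periodic, so it cannot close this gap.

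What the paper actually does --- and what is the essential idea you are missing --- is to make the period-breaking \emph{quantitatively controlled}: when $q_0$ occurs at level $\ell$ in a Toeplitz encoding of the state sequence, the corresponding layer places $\blacksquare$-marks spaced exactly $2^\ell$ apart, so that it breaks all vectors of length $<2^\ell$ but is itself $2^\ell$-periodic. With this control, finitely many $q_0$-visits (all at levels $\le N$) force the configuration to be $2^N$-periodic along one axis, while infinitely many visits break arbitrarily long vectors. Encoding the time coordinate in a Toeplitz level is precisely what lets a local (effective) rule tie the period of the marking layer to the time of the visit; your spacetime-diagram layout has no such local handle on ``time'' and gives the SFT rule no way to bound the period of the opening. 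Additionally, even with controlled periods the paper must handle a compactness subtlety you do not mention: there can be one position of infinite Toeplitz level which, by itself, breaks arbitrarily long vectors. The paper defuses this by using two extra dimensions and splitting $(q_0,\uo)$ and $(q_0,\uc)$ so that the single infinite-level position can spoil at most one of them. Without a mechanism for controlled period-breaking and a treatment of this limit position, the reduction does not go through.
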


By \cref{th:existence-of-an-aperiodic-configuration-sigma-1-1}, $\textbf{AD} \in \Sigma_1^1$. We prove $\Sigma_1^1$-hardness for $d=3$ and the higher-dimensional cases will follow. 

To prove $\Sigma_1^1$-hardness, we reduce (many-one reduction) the problem \textbf{SR}. Let $\mathcal{M}$ be some nondeterministic Turing machine (NTM) and $q_0$ one of its states. We create a sofic subshift $Y_3$ which contains an aperiodic configuration if and only if $\mathcal{M}$ admits a run from the empty word which visits $q_0$ infinitely often. The proof is divided in three parts:
\begin{enumerate}
\item \Cref{sec:1d-toeplitz}: creation of an auxiliary $\Z$ Toeplitz subshift $T_\mathcal{M}$;
\item \Cref{sec:sofic-y3}: creation of $Y_3$ and proof that it is sofic;
\item \Cref{sec:3d-reduction}: proof that $Y_3 \in \textbf{AD}$ iff $(\mathcal{M},q_0) \in \textbf{SR}$.
\end{enumerate}

\subsection{\texorpdfstring{$T_\mathcal{M}$}{T\_M}: \texorpdfstring{$\Z$}{Z} Toeplitz corresponding to state sequences of \texorpdfstring{$\mathcal{M}$}{M}}\label{sec:1d-toeplitz}

In this section, we transform the set of sequences of states in all the runs of $\mathcal{M}$ into a $\Z$ subshift with a convenient structure called \emph{Toeplitz}.

\paragraph*{$\Z$ Binary Toeplitz subshift $X_T$}

Consider the substitution $\sigma$ on the alphabet $\{\uor,\ucr,\uo,\uc\}$:
\[  \sigma(\uor) = \uo \ucr \qquad \sigma(\ucr)= \uc \ucr\qquad \sigma(\uo) = \uo \uor \qquad \sigma(\uc) = \uc \uor \]
Define the $\Z$-subshift $X_\sigma$ by forbidding every pattern that does not appear in $\sigma^\omega(\uo)$ (any other seed symbol would yield the same subshift).
\[ \sigma^\omega(\uo) = \uo \uor \uo \ucr \uo \uor \uc \ucr \uo \uor \uo \ucr \uc \uor \uc \ucr \, \dots \]

\begin{definition}[Binary Toeplitz subshift]\label{def:binary-toeplitz}
  The \emph{binary Toeplitz subshift} $X_T$ is the image of $X_\sigma$ under the projection that maps $\{\uor, \uo\}$ to $\uo$ and $\{\ucr, \uc\}$ to $\uc$.
\end{definition}

$X_T$ is a Toeplitz subshift \cite{1969-JK}. It corresponds to the "period-doubling" or "ruler" (modulo 2) sequences (resp. A001511 and A096268 in the OEIS). In a configuration of~$X_T$,
\begin{description}
\item[Level 1] One position out of two has an alternating sequence of $\uo$ and $\uc$;
\item[Level 2] One position out of two \emph{in the remaining positions} (i.e. one out of four) has an alternating sequence of $\uo$ and $\uc$, etc.
\end{description}

More generally, a position $i$ is of level $\ell$ if it has minimal period $2^{\ell+1}$ (cells at positions $i+k2^{\ell+1}$ all have the same value). In a configuration of $X_T$, there may exist at most one position $i$ which does not have a level (i.e. it is not $2^{\ell+1}$ periodic for any $\ell$): we say that $\level_z(i)=\infty$. Given as input a finite pattern of size between $2^n$ and $2^{n+1}$, one can compute all levels $\ell \leq n-1$.

\paragraph*{$T_\mathcal{M}$ : $\Z$-Toeplitzification of sequences of states of $\mathcal{M}$}

\begin{definition}[Toeplitzification of a set of sequences]\label{def:toeplitzification}
Given a set of sequences $A \subseteq \Sigma^\N$, we define the corresponding Toeplitzified subshift $T_A$ on the alphabet $\Sigma \times \{\uo,\uc\}$ as:
\[ T_A = \left\{ (x,z) \in (\Sigma \times \{\uo,\uc\})^\Z :
  \begin{array}{l}
    z \in X_T, \exists (a_n)_{n \in \N} \in A, \\
    \forall i \in \Z, \level_z(i) = \ell \in \N \implies x_i = a_\ell
  \end{array}\right\} \]
\end{definition}

Note that a position of infinite level may be marked with any symbol of $\Sigma$. We cannot force this symbol without breaking the next lemma.

Now take $\Sigma = Q$, the set of states of $\mathcal{M}$, and define $S_\mathcal{M}$ as the set of sequences $(s_t)_{t \in \mathbb{N}}$ on the alphabet $\Sigma$ such that there exists a non-terminating run of $\mathcal{M}$ from the empty input whose state at time $t$ is~$s_t$. Let $T_\mathcal{M} := T_{S_\mathcal{M}}$ be its Toeplitzification.

\begin{lemma}\label{lemma:tm-effective}
$T_\mathcal{M}$ is a $\Z$ effective subshift.
\end{lemma}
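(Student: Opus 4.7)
The plan is to exhibit an r.e. family $\mathcal{F}$ of forbidden patterns over $\Sigma \times \{\uo,\uc\}$ with $X_\mathcal{F} = T_\mathcal{M}$, organised as the union of three r.e. sublists: (i) the $z$-component $u$ is not a pattern of $X_T$; (ii) two positions of the same determinable level in $u$ bear different symbols in $w$; (iii) the partial state function read off the determinable levels of $u$ is not the restriction of any state sequence in $S_\mathcal{M}$.

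Sublists (i) and (ii) are routine. The substitution $\sigma$ is primitive, so the language of $X_T$ is decidable (decide $u \in \mathcal{L}(X_T)$ by checking appearance of $u$ in $\sigma^k(\uo)$ for a computable bound on $k$), which yields (i). For (ii), the levels $\ell \le \lfloor \log_2 |u| \rfloor - 1$ can be read off every position of $u$, so one checks decidably whether two such positions of the same level receive different symbols in $w$.

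The main obstacle is sublist (iii). Given $(w,u)$ passing the previous checks, one extracts a partial function $\phi_u : L \to \Sigma$ from the determinable levels and must semi-decide whether $\phi_u$ admits no extension lying in $S_\mathcal{M}$. The existence of a matching non-terminating run is a priori a second-order statement, but the key observation is that the set of finite run-prefixes of $\mathcal{M}$ compatible with $\phi_u$ forms a decidable, finitely branching tree $T_{\phi_u}$: nodes are configuration sequences $(c_0,\ldots,c_T)$ with $c_0$ the empty initial configuration, each $c_{t+1}$ a valid nondeterministic successor of $c_t$, and the state of $c_\ell$ equal to $\phi_u(\ell)$ whenever $\ell \le T$ lies in $L$. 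By König's lemma, an infinite run extending $\phi_u$ exists iff $T_{\phi_u}$ is infinite; hence the absence of such a run is a $\Sigma_1^0$ condition, semi-decided by searching for a depth beyond which every branch of $T_{\phi_u}$ has died.

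It remains to verify $X_\mathcal{F} = T_\mathcal{M}$. Every pattern in $\mathcal{F}$ is genuinely outside $\mathcal{L}(T_\mathcal{M})$, giving $X_\mathcal{F} \supseteq T_\mathcal{M}$. Conversely, for any $(x,z) \notin T_\mathcal{M}$, some finite factor of $(x,z)$ must lie in $\mathcal{F}$: either $z \notin X_T$ (case (i)), or two positions sharing a finite level in $z$ disagree in $x$ (a sufficiently wide factor reveals case (ii)), or the level-indexed sequence $(b_\ell)$ canonically read off $(x,z)$ is outside $S_\mathcal{M}$, in which case König's lemma provides a depth $L$ such that a factor of $(x,z)$ exhibiting every level $\ell \le L$ falls in sublist (iii).
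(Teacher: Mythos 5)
Your proof is correct, but the decomposition differs from the paper's in where compactness is invoked. The paper forbids a pattern $(u,v)$ when its partial state function fails to match the state sequence of any \emph{finite} run of $\mathcal{M}$ of the appropriate length --- a decidable condition --- and then uses a K\"onig-type argument in the correctness proof of $E \subseteq T_\mathcal{M}$ to argue that a configuration whose every finite window is compatible with some finite run must in fact yield a full sequence in $S_\mathcal{M}$. You instead forbid a pattern when its partial state function does not extend to any \emph{infinite} run, and use K\"onig's lemma up front to recast this a priori second-order condition as $\Sigma_1^0$ (the compatible tree $T_{\phi_u}$ of finite runs being finite is equivalent to there being no infinite extension, and ``some level has no nodes'' is semi-decidable). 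The paper's family of forbidden patterns is thus decidable, whereas yours is only r.e.; both suffice for effectiveness. Note also that your verification of $X_\mathcal{F} \subseteq T_\mathcal{M}$ silently invokes compactness of $S_\mathcal{M}$ a second time --- to pass from ``the full sequence $(s_\ell)$ read off $(x,z)$ is not in $S_\mathcal{M}$'' to ``some finite prefix of $(s_\ell)$ is already not extendable'', which is what lets you exhibit a factor in sublist (iii) --- so a short remark spelling this out (that $S_\mathcal{M}$ is a closed subset of $\Sigma^{\N}$, by the same finite-branching argument) would tighten the argument. Aside from this, the proof is sound and both routes are valid variants of the same underlying idea.
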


\begin{proof}
This stems from the fact that the set of prefixes of $S_\mathcal{M}$ is computable: for any $n\geq 0$, we can enumerate all oracles of non-determinism of $\mathcal{M}$ of size~$\leq n$ and compute $S_n$, the set of finite prefixes of length $n$ in $S_\mathcal{M}$.

Consider the following algorithm that defines a family of forbidden patterns. For all $n$:
\begin{itemize}
\item Compute the globally admissible patterns of $X_T$ of size $2^n+1$; (\emph{Note that the language of patterns of $X_T$ is computable: it is both recursively and co-recursively enumerable.)}
\item Compute $\mathcal S_n$ ;
\item Forbid all patterns $(u,v) \in {\left(\dalphabet\right)}^{2^n+1}$, except if $v$ is a pattern in $X_T$ and there exists a prefix $(s_t)_{0 \leq t \leq n} \in \mathcal S_n$ such that: 
    \[\forall i,j\in\Z,\; \left(\level_{v}(i) = \level_{v}(j) \leq n\right) \implies u_i = u_j = s_{\level_{v}(i)}.\]
\end{itemize}
This procedure defines an effective subshift $E$. We prove $E = T_\mathcal{M}$. Indeed:
\begin{description}
\item [$E \subseteq T_{\mathcal M}$] Take $(u, v) \in E$ and $(u^n,v^n) = (x,z)[-2^n,2^n]$. By definition of $E$, there exists a finite prefix $s^n\in S_n$ such that for any positions $i,j$ with $\level_{v^n}(i) = \level_{v^n}(j) = \ell$, we have $u^n_i = u^n_j = s_\ell$. This sequence of prefixes is increasing, so it converges towards some sequence $s \in S_\mathcal{M}$. Then for any $i,j \in \mathbb{Z}$ such that $\level_{v}(i) = \level_{v}(j) = \ell < +\infty$, we have $x_i = x_j = s_\ell$. So $(x,z) \in T_\mathcal{M}$.
\item [$T_{\mathcal M} \subseteq E$] No pattern forbidden in the algorithm appears in any configuration of $T_\mathcal{M}$. \qedhere
\end{description}
\end{proof}

\subsection{\texorpdfstring{$Y_3$}{Y\_3}: the desired \texorpdfstring{$\Z^3$}{Z\^{}3} subshift}\label{sec:sofic-y3}

We create a subshift $Y_3$ which contains an aperiodic configuration if and only if there exists a run of $\mathcal{M}$ on the empty word which visits $q_0$ infinitely often. As one might expect, each configuration of $Y_3$ contains the lift of a configuration of $T_\mathcal{M}$ corresponding to a run of $\mathcal{M}$. We then add lines to make it aperiodic if and only if $q_0$ appears infinitely often.

However, every decision of breaking periods must occur locally at every level, without the ability to know whether the future number of visits of $q_0$ is finite or infinite. Otherwise compactness would create issues: as visits of $q_0$ can occur arbitrarily late, a position of finite level could be tricked to ``believe'' that $q_0$ is visited infinitely often in the future. That is why we will break periods whose size depend on the level of the positions in the Toeplitz structure.

\paragraph*{Effective 2D subshifts: $Y_2^{\uo}$ and $Y_2^{\uc}$}
A configuration of $Y_2^{\uo}$ is composed of three layers:
\begin{itemize}
\item Layer 1 \& 2 : it contains a $\Z^2$ lift of a configuration $x' \in T_\mathcal{M}$. That is, $\forall i,j : x_{i,j} = x'_{i}$.
\item Layer 3: on the alphabet $\{\blacksquare,\square\}$. For every $\ell$ and in every column of level $\ell$ containing $(q_0, \uo)$ on Layers 1 and 2, Layer 3 contains regularly placed $\blacksquare$ cells separated by $2^\ell-1$ $\square$ cells. Every other cell contains $\square$ on Layer 3.
\end{itemize}

Formally, $Y_2^{\uo}$ can be written as:
\[
    \left\{ \begin{array}{l}x \in \,{\left(\dalphabet \times \{\blacksquare,\square \} \right)}^{\Z^2} : \exists x' \in T_\mathcal{M}, \pi_{1,2}(x) \text{ is a $\Z^2$ lift of $x'$}, \\
  \hspace{4.96cm} \exists z \in X_T, \pi_2(x) \text{ is a $\Z^2$ lift of $z$},\\
  \forall i,\,j \in \Z,\\
   \qquad  x_{i,j} = (\cdot,\cdot,\blacksquare) \implies \forall j', x_{i,j'} = (q_0,\uo,\cdot) \\
   \qquad x_{i,j} = (q_0,\uo,\cdot) \text{ and } \level_z(i) = \ell \in \N \implies \exists j', \forall j'', x_{i,j''} = (\cdot, \cdot,\blacksquare) \iff 2^\ell \mid (j'' - j') \\
  \qquad x_{i,j} = (q_0,\uo,\cdot) \text{ and } \level_z(i) = \infty \implies |\{ j' : x_{i,j'} = (\cdot, \cdot,\blacksquare)\}| \leq 1
    \end{array}\right\}
 \]

Its counterpart $Y_2^{\uc}$ is defined similarly by replacing $\uo$ by $\uc$ in the previous definition. 
It is clear that both $Y_2^{\uo}$ and $Y_2^{\uc}$ are effective $\Z^2$ subshifts.

\paragraph*{Issues with the position of infinite level}\hypertarget{par:infinite-level-aper}{}

Note that $\blacksquare$ symbols break increasingly large periods as levels in the Toeplitz structure increase: by compactness, a position of infinite level can break periods of every size by~itself.

This explains why this construction requires two additional dimensions to $T_\mathcal{M}$ instead of one: each position in $Y_3$ will be periodic in one dimension, and breaks periods in the other. This way, the single position of infinite level may break horizontal or vertical periods, but not both.

\paragraph*{Sofic 3D subshifts: $Y_3^{\uo}$ and $Y_3^{\uc}$}
By \Cref{thold:effective-sofic}, every $d$-dimensional effective subshift can be lifted into a $d+1$-dimensional sofic subshift. Using this result, we lift $Y_2^{\uo}$ and $Y_2^{\uc}$ into 3D sofic subshifts $Y_3^{\uo}$ and $Y_3^{\uc}$:
\begin{align*}
  Y_3^{\uo} & = \{ x \in {\left(\dalphabet \times \{\blacksquare,\square \} \right)}^{\Z^3} : \exists x' \in Y_2^{\uo}, \forall i,k \in \Z, \forall j' \in \Z, x_{i,j',k} = x'_{i,k} \} \\
  Y_3^{\uc} & =  \{ x \in {\left(\dalphabet \times \{\blacksquare,\square \} \right)}^{\Z^3} : \exists x' \in Y_2^{\uc}, \forall i,j \in \Z, \forall k' \in \Z, x_{i,j,k'} = x'_{i,j} \}
\end{align*}
\emph{Note that the lifts are not made along the same coordinates: a position with $\blacksquare$ in $Y_2^{\uo}$ lifts into a line directed by $(0,1,0)$ in $Y_3^{\uo}$, and a position with $\blacksquare$ in $Y_2^{\uc}$ lifts into a line directed by $(0,0,1)$ in $Y_3^{\uc}$.}
\paragraph*{Sofic 3D subshift: $Y_3$}

We obtain $Y_3$ by "fusing" the two previous subshifts. Formally,
   \[Y_3 = \{ x \in \, {\left(\dalphabet \times \{\blacksquare,\square\} \times \{\blacksquare,\square\}\right)}^{\Z^3} : \pi_{1,2,3}(x) \in Y_3^{\uo} \text{ and } \pi_{1,2,4}(x) \in Y_3^{\uc}\}.\]

Since $Y_3^{\uo}$ and $Y_3^{\uc}$ are sofic, their cartesian product $Y_3^{\uo} \times Y_3^{\uc}$ is also sofic. $Y_3$ is the projection on Layers $1,2,3,6$ of $Y_3^{\uo} \times Y_3^{\uc}$ with the additional local condition that the first two layers coincide (i.e. $\pi_{1,2}(x) = \pi_{4,5}(x)$), so it is sofic as well.

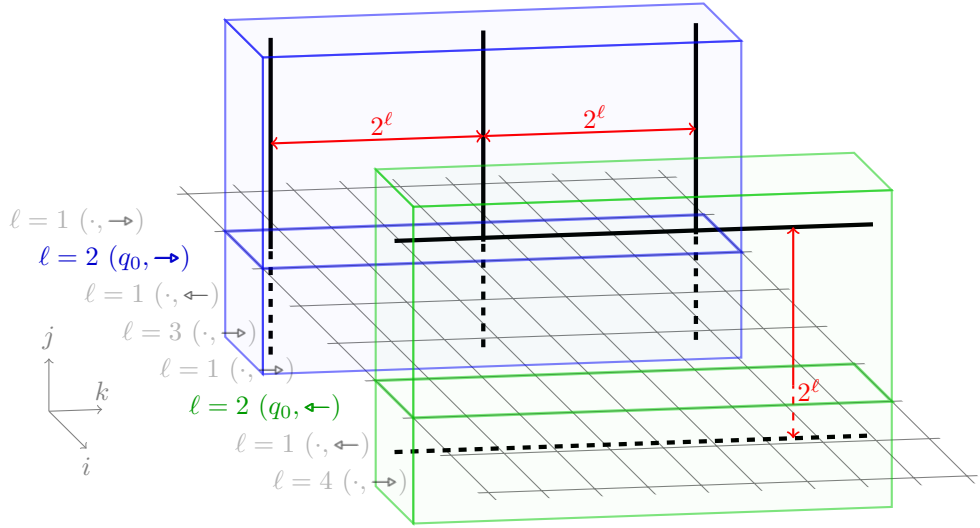
\begin{figure}[t]
  \centering
  \begin{tikzpicture}[scale = 0.7, x={(-45:1cm)},y={(182:1cm)},z={(90:1cm)}]
  \begin{scope}[canvas is xy plane at z=0]
    \clip (-0.2,0.2) rectangle (8.2,-9.2);
    \draw[gray] (-1,1) grid (9,-10);
  \end{scope}
  \begin{scope}[canvas is yz plane at x=1.5]
    \draw[ultra thick] (-0.5,0) -- (-0.5,4);
    \draw[ultra thick,dashed] (-0.5,-2) -- (-0.5,0);
    \draw[ultra thick] (-4.5,0) -- (-4.5,4);
    \draw[ultra thick,dashed] (-4.5,-2) -- (-4.5,0);
    \draw[ultra thick] (-8.5,0) -- (-8.5,4);
    \draw[ultra thick,dashed] (-8.5,-2) -- (-8.5,0);
    \draw[thick, red,<->] (-0.5,2) to (-4.5,2);
    \draw[red] (-2.6,2.3) node{$2^\ell$};
    \draw[thick,red,<->] (-4.5,2) to (-8.5,2);
    \draw[red] (-6.6,2.3) node{$2^\ell$};
  \end{scope}
  \begin{scope}[canvas is yz plane at x=5.5]
    \draw[ultra thick] (0,3) -- (-9,3);
    \draw[ultra thick,dashed] (0,-1) -- (-9,-1);
    \draw[thick,red,->] (-7.5,0) to (-7.5,3);
    \draw[thick,red,<-,dashed] (-7.5,-1) to (-7.5,0);
    \draw[red] (-7.8,-0.1) node{$2^\ell$};
  \end{scope}
      {
        \draw[fill=blue,opacity=0.02,canvas is yz plane at x=2] (0,-2) rectangle (-9,4);
        \draw[thick,blue,opacity=0.5,canvas is yz plane at x=2] (0,-2) rectangle (-9,4);
        \draw[fill=blue,opacity=0.02,canvas is xz plane at y=0] (1,-2) rectangle (2,4);
        \draw[thick,blue,opacity=0.5,canvas is xz plane at y=0] (1,-2) rectangle (2,4);
        \draw[fill=blue,opacity=0.02,canvas is xy plane at z=4] (1,0) rectangle (2,-9);
        \draw[thick,blue,opacity=0.5,canvas is xy plane at z=4] (1,0) rectangle (2,-9);
        \draw[very thick,opacity=0.5,blue,canvas is xy plane at z=0] (1,0) rectangle (2,-9);
      }
      {
        \draw[fill=black!20!green,opacity=0.02,canvas is yz plane at x=6] (0,-2) rectangle (-9,4);
        \draw[thick,black!20!green,opacity=0.5,canvas is yz plane at x=6] (0,-2) rectangle (-9,4);
        \draw[fill=black!20!green,opacity=0.02,canvas is xz plane at y=0] (5,-2) rectangle (6,4);
        \draw[thick,black!20!green,opacity=0.5,canvas is xz plane at y=0] (5,-2) rectangle (6,4);
        \draw[fill=black!20!green,opacity=0.02,canvas is xy plane at z=4] (5,0) rectangle (6,-9);
        \draw[thick,black!20!green,opacity=0.5,canvas is xy plane at z=4] (5,0) rectangle (6,-9);
        \draw[very thick,black!20!green,opacity=0.5,canvas is xy plane at z=0] (5,0) rectangle (6,-9);
      }
      \draw[opacity=0.3] (.6,2.5,0) node{$\ell = 1\ (\cdot,\uo)$};
      \draw[black!20!blue] (1.6,2.5,0) node{$\ell = 2\ (q_0,\uo)$};
      \draw[opacity=0.3] (2.6,2.5,0) node{$\ell = 1\ (\cdot,\uc)$}; 
      \draw[opacity=0.3] (3.6,2.5,0) node{$\ell = 3\ (\cdot,\uo)$}; 
      \draw[opacity=0.3] (4.6,2.5,0) node{$\ell = 1\ (\cdot,\uo)$}; 
      \draw[black!40!green] (5.6,2.5,0) node{$\ell = 2\ (q_0,\uc)$}; 
      \draw[opacity=0.3] (6.6,2.5,0) node{$\ell = 1\ (\cdot,\uc)$}; 
      \draw[opacity=0.3] (7.6,2.5,0) node{$\ell = 4\ (\cdot,\uo)$};
      \draw[canvas is xz plane at y=6.5, gray, ->] (5.5,0) -- ++(0,1) node[at end, above, gray] {$j$};
      \draw[canvas is xz plane at y=6.5, gray, ->] (5.5,0) -- ++(1,0) node[at end,below, gray] {$i$};
      \draw[canvas is yz plane at x=5.5, gray, ->] (6.5,0) -- ++(-1,0) node[at end,above, gray] {$k$};
\end{tikzpicture}

  \caption{A configuration of $Y_3$. To the left of each slice $(i,\cdot,\cdot)$ is its level $\ell$ and the values on Layers 1 and 2. We highlight two slices of level~2: at the front, marked by $(q_0, \uc$) with horizontal lines; at the back, marked by $(q_0, \uo)$ with vertical lines.}
\end{figure}

\begin{claim}\label{claim:y3-period-breaker}
A configuration of $Y_3$:
\begin{enumerate}
\item breaks every periodicity vector $(n,\cdot,\cdot)$ for $n \geq 1$.
\item every slice $(i,\cdot,\cdot)$ containing $(q_0,\uo)$ on the first two layers and corresponding to the lift of a single position of level $\ell \in [0,+\infty]$ in $T_\mathcal{M}$, is periodic with periods $(0,1,0)$ and $(0,0,2^\ell)$ but breaks every period $(\cdot,\cdot,n)$ for ${1 \leq n <2^\ell}$. The same is true with $(q_0,\uc)$ with vectors $(0,2^\ell,0)$ and $(0,0,1)$. 
\end{enumerate}
\end{claim}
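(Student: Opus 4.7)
My plan is to analyze an arbitrary configuration $x \in Y_3$ layer by layer, using that Layers~1--2 come from the $\Z$ Toeplitz $T_\mathcal{M}$ while each marker layer encodes periodicity along only one of the two ``extra'' directions. For item~(1), the successive lifting definitions of $Y_3^{\uo}$ and $Y_3^{\uc}$ together with the fusion condition $\pi_{1,2}(x) = \pi_{4,5}(x)$ force Layers~1--2 of $x$ to depend only on the first coordinate~$i$, and the Layer-2 word $(z_i)_{i \in \Z}$ to lie in $X_T$. Every configuration of $X_T$ is strongly aperiodic (it is the orbit closure of the period-doubling sequence), so for any $n \geq 1$ there exists $i_0$ with $z_{i_0} \neq z_{i_0 + n}$, and for any $\alpha, \beta \in \Z$ the Layer-2 values at $x_{(i_0, 0, 0)}$ and $x_{(i_0 + n, \alpha, \beta)}$ differ, breaking the period $(n, \alpha, \beta)$.

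For item~(2), I would fix $i$ carrying $(q_0, \uo)$ on Layers~1--2 at level $\ell$ and describe the slice $S_i = x|_{\{i\} \times \Z^2}$ layer by layer. Layers~1--2 are constantly $(q_0, \uo)$ by the Toeplitz lift. The $\uo$-marker layer is inherited from $Y_3^{\uo}$, which lifts $Y_2^{\uo}$ along $(0,1,0)$; the defining axiom of $Y_2^{\uo}$ places $\blacksquare$ in column~$i$ at exactly those cells $k \equiv k_0 \pmod{2^\ell}$ for some offset~$k_0$, and the subsequent $j$-lift turns them into full $(0,1,0)$-lines spaced $2^\ell$ apart in~$k$. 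The $\uc$-marker layer is inherited from $Y_3^{\uc}$; since column~$i$ carries $(q_0, \uo) \neq (q_0, \uc)$, no $\blacksquare$ is placed at column~$i$, so this layer is identically~$\square$ on~$S_i$.

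From this explicit description, $(0, 1, 0)$ and $(0, 0, 2^\ell)$ are visibly periods of $S_i$. Conversely, for $1 \leq n < 2^\ell$ and any $\alpha, \beta$, the shift $\sigma^{(0, \beta, n)} S_i$ places its $\uo$-marker lines at $k \equiv k_0 - n \pmod{2^\ell}$; since $0 < n < 2^\ell$, the position $(i, 0, k_0)$ carries $\blacksquare$ in $S_i$ but $\square$ after the shift, breaking the period. The $(q_0, \uc)$ case is symmetric with $j$ and $k$ swapped---because $Y_3^{\uc}$ lifts along $(0, 0, 1)$ rather than $(0, 1, 0)$---yielding the dual periods $(0, 2^\ell, 0)$ and $(0, 0, 1)$. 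The main thing to be careful about is tracking which coordinate each intermediate sofic lift is constant along; the $\ell = +\infty$ case reduces to the same computation, with at most one $\blacksquare$-line in place of the $2^\ell$-periodic pattern.
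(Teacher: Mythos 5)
Your proof is correct and follows the same route as the paper's: item (1) from the aperiodicity of the $X_T$ Toeplitz sequence on Layer~2, which depends only on $i$; item (2) by observing that Layers~1, 2, and 4 are constant on the slice while Layer~3 is a $j$-lift of the $\blacksquare$-lines spaced $2^\ell$ apart from $Y_2^\uo$. You merely spell out the coordinate bookkeeping that the paper's terser argument leaves implicit.
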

\begin{claimproof}
\begin{enumerate}
\item the Toeplitzification of alternating $\uo$ and $\uc$ is aperiodic, so Layer 2 breaks all vectors $(n,\cdot,\cdot)$ for $n \geq 1$.

\item Layer 1 and 2 are lifted along the last two dimensions, so they cannot break any such vectors. Layer 3 is lifted along the second dimension so it is $(0,1,0)$-periodic, and breaks the required vectors from the last condition in the definition of $Y_2^{\uo}$. Layer 4 is $\square$ everywhere since it is not marked by $(q_0,\uc)$.\claimqedhere
\end{enumerate}\end{claimproof}

\subsection{Proof of the reduction \textbf{RS} \texorpdfstring{$\leq$}{<=} \textbf{AD}}\label{sec:3d-reduction}

\begin{lemma}\label{claim:y3-aperiodicity}
A configuration in $Y_3$ is aperiodic if, and only if, it corresponds to a run of $\mathcal{M}$ in which $q_0$ occurs infinitely often.
\end{lemma}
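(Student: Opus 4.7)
The plan is to prove both directions of the equivalence using Claim~\ref{claim:y3-period-breaker} together with a careful case analysis on the unique possible position of infinite level in the underlying $T_\mathcal{M}$-configuration. Every configuration of $Y_3$ encodes, on Layers~1--2, a Toeplitzified state sequence $(a_\ell)_{\ell \geq 0}$ of a run of $\mathcal{M}$, so ``$q_0$ visited infinitely often'' translates to ``$a_\ell = q_0$ for infinitely many $\ell$''. A key preliminary observation is that level-$\ell$ positions of $X_T$ carry both $\uo$ and $\uc$ on Layer~2 (they alternate within each level), so if $a_\ell = q_0$ then both $(q_0,\uo)$- and $(q_0,\uc)$-columns of level $\ell$ occur in the configuration.

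For the easier implication ``$q_0$ infinitely often $\Rightarrow$ aperiodic'', I take any nonzero $p = (n_1, n_2, n_3)$ and show it is broken. If $n_1 \neq 0$, Claim~\ref{claim:y3-period-breaker}(1) applies directly via Toeplitz aperiodicity of Layer~2. If $n_1 = 0$ and $n_3 \neq 0$, I pick a level $\ell$ with $a_\ell = q_0$ and $2^\ell > |n_3|$ and apply Claim~\ref{claim:y3-period-breaker}(2) to a $(q_0,\uo)$-column of level $\ell$; since Layer~3 is $(0,1,0)$-periodic, breaking $(0, \cdot, n_3)$ on that slice is enough to break $(0, n_2, n_3)$ globally. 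The remaining case $n_3 = 0$, $n_2 \neq 0$ is symmetric via a $(q_0,\uc)$-column of level $\ell$ with $2^\ell > |n_2|$.

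For the contrapositive ``$q_0$ finitely often $\Rightarrow$ periodic'', I assume $a_\ell = q_0$ only for $\ell \leq \ell^\star$ and set $N = 2^{\ell^\star}$. Every finite-level $(q_0,\cdot)$-column then has level $\leq \ell^\star$, so its Layer-3 or Layer-4 markers have period dividing $N$ in the relevant direction. The delicate point is the possible infinite-level position of $T_\mathcal{M}$, whose Layer-1,2 value is unconstrained by \Cref{def:toeplitzification}. I split on this value: (i) if absent or not of the form $(q_0,\cdot)$, the entire configuration is $(0, N, N)$-periodic; (ii) if $(q_0, \uo)$, Layer~3 at that column may carry an isolated $\blacksquare$ destroying all $k$-periods, but Layer~4 remains entirely $\square$ there (the column is not $(q_0,\uc)$), so $(0, N, 0)$ survives as a global period; (iii) if $(q_0, \uc)$, symmetrically $(0, 0, N)$ is a period. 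The main obstacle is precisely this infinite-level case — compactness prevents forbidding an isolated $\blacksquare$ there, as foreshadowed in the paragraph ``Issues with the position of infinite level'' — and the construction's device of lifting $Y_2^{\uo}$ and $Y_2^{\uc}$ along \emph{different} axes is exactly what guarantees a surviving period in the free direction of cases (ii) and (iii).
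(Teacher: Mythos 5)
Your proof is correct and takes essentially the same approach as the paper's: both directions rest on \Cref{claim:y3-period-breaker}, the forward direction uses the fact that $q_0$-positions appear at arbitrarily high level, and the backward direction cases on the Layer-1,2 value at the unique infinite-level position. Your write-up spells out two points the paper leaves implicit — that each finite level of $X_T$ carries both $\uo$ and $\uc$, and the explicit three-way case split on $(n_1,n_2,n_3)$ — but the underlying argument is the same.
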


\begin{proof}
Using \Cref{claim:y3-period-breaker},
\begin{itemize}
\item Let $y \in Y_3$ be a configuration corresponding to a run of $\mathcal{M}$ that visit $q_0$ infinitely often.
  \begin{itemize}
  \item If $(q_0, \uo)$ appears at a level $\ell$, all vectors $(0,\cdot, n)$ for $1 \leq n<2^\ell$ are broken on Layer 3;
  \item Similarly for $(q_0, \uc)$ and vectors $(0, n, \cdot)$ on Layer 4.
    \end{itemize}
Therefore all vectors $(0,\cdot,\cdot)$ are broken at some level, and vectors $(n,\cdot,\cdot)$ are always broken for $n \geq 1$, so $y$ is an aperiodic configuration.
  
\item Let $y \in Y_3$ be a configuration corresponding to a run of $\mathcal{M}$ that does not visit $q_0$ after some time $N \in \N$. Let $a_\infty \in \dalphabet$ be the value on Layers 1 and 2 of the \emph{single} position of infinite level in $z$, if it exists.
  \begin{itemize}
  \item If $a_\infty \neq (q_0,\uc)$, positions marked by $(q_0,\uc)$ must be of level~$\leq N$, so $y$ is periodic of period $(0,2^{N},0)$.
  \item Similarly, if $a_\infty \neq (q_0,\uo)$, then $y$ is periodic of period $(0,0,2^{N})$.
  \end{itemize}
  
  All in all, $y$ is not aperiodic.\qedhere
\end{itemize}
\end{proof}

\begin{proof}[Case $d>3$]
We lift the previous construction and fill the additional dimensions with aperiodicity. More precisely, in the construction of $Y_3$, one of the dimension is always aperiodic, and the two others may or may not be periodic. Let $A$ be the $\Z^d$~lift of any $\Z^{d-2}$ aperiodic sofic subshift ($d-2 \geq 2$), and $Y_d$ the $\Z^d$ lift of $Y_3$. The cartesian product $A\times Y_d$ is aperiodic if and only if $Y_3$ is.
\end{proof}

\section{\texorpdfstring{$\Sigma_1^1$}{Sigma\_{}1\^{}1}-completeness for \texorpdfstring{$\Z^d$}{Z\^{}d} SFTs, \texorpdfstring{$d \geq 4$}{d >= 4}}
\label{sec:sft-highly-undecidable}

\begin{theorem}\label{th:4d-sft}
If $d \geq 4$, \textbf{AD} for $\Z^d$ SFTs is a $\Sigma_1^1$-complete problem.
\end{theorem}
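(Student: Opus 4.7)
The plan is to adapt the $\Z^3$ sofic construction to a $\Z^4$ SFT, using the additional dimension to realize explicitly the computation that was hidden inside the projection from SFT to sofic in the previous section. The case $d>4$ will then follow by taking the cartesian product with an aperiodic SFT in the remaining $d-4$ dimensions, exactly as in the sofic argument.

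More concretely, the plan is to invoke \Cref{thold:effective-sofic} at the level of its construction: it produces, for any effective subshift $X \subseteq \Sigma^{\Z^d}$, a $(d+1)$-dimensional SFT whose projection realises the $(d+1)$-lift of $X$, with the additional dimension used to verify the effective constraints via computation tapes. Applying this to $Y_2^{\uo}$ and $Y_2^{\uc}$ produces 3D SFTs, and combining them as in \Cref{sec:sofic-y3} but using the fourth dimension for computation yields a 4D SFT $Y_4$ whose projection onto its ``$Y_3$-layers'' is exactly $Y_3$. Soficness is preserved at every intermediate step, and taking an SFT cover ultimately stays an SFT because the extra dimension already carried the computation.

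One direction of aperiodicity transfer is then free: a periodic $Y_4$-configuration projects to a periodic $Y_3$-configuration, so any aperiodic $Y_3$-configuration has only aperiodic preimages in $Y_4$, giving one side of the reduction via \Cref{claim:y3-aperiodicity}. The main obstacle will be the converse direction: a priori the computation tape could be aperiodic even when the underlying $Y_3$-configuration is periodic, yielding spurious aperiodic $Y_4$-configurations that do not correspond to a valid run of $\mathcal M$. To avoid this, the computation layer must be rigid enough that every periodic $Y_3$-configuration forces its computation-tape extension to be periodic as well---typically by making the computation essentially deterministic given the base, so that periodic input produces periodic tape content. Once this periodicity-rigidity is established, an aperiodic $Y_4$-configuration forces its $Y_3$-projection to be aperiodic, and the reduction from $\textbf{SR}$ concludes exactly as in \Cref{claim:y3-aperiodicity}.
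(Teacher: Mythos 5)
There is a genuine gap, and it sits exactly where you flag it: the ``periodicity-rigidity'' of the computation layer is not something you can postulate, and it is provably unavailable from \Cref{thold:effective-sofic} used as you propose. The SFT covers produced by the constructions behind that theorem (Hochman, Durand--Romashchenko--Shen, Aubrun--Sablik) verify the effective constraints by embedding unbounded computation zones organised by a hierarchical, Robinson-like (or fixed-point) scaffolding; this scaffolding is aperiodic in the plane carrying the computation \emph{independently of the base configuration}, and in fact those covers contain no periodic configuration at all. So ``periodic base $\Rightarrow$ some (or every) periodic tape extension'' cannot hold, no matter how deterministic the tape content is made: the obstruction is not the tape content but the aperiodic structure delimiting the computation zones. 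Consequently, if $\mathcal M$ has a non-terminating run that visits $q_0$ only finitely often, every configuration of $Y_3$ is periodic (\Cref{claim:y3-aperiodicity}) but your $Y_4$ would still contain aperiodic configurations coming from the cover's scaffolding, so the ``only if'' direction of the reduction fails. (The ``free'' direction is also shakier than stated: whichever coordinate you reserve as a lift or computation direction must still get its periods broken in a controlled way, which your sketch does not arrange.)

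The paper's construction is designed precisely to sidestep this, and it needs two ingredients your plan lacks. First, the computation is confined to dimensions 1--2, which are aperiodic in \emph{every} configuration by design: the base is the two-dimensional Toeplitz structure $X_T^2$ (sofic via a deterministic substitution and Mozes' theorem), superimposed with the Toeplitzified run to get $T_\mathcal{M}^2$, and its SFT cover $X_2$ is lifted so as to be constant along dimensions 3 and 4 -- hence the cover's uncontrolled aperiodicity never touches the two ``controlled'' directions. Second, the level-dependent period breaking in dimensions 3 and 4 can no longer be imposed by the effective rule ``marks every $2^\ell$ cells'' used in $Y_2^{\uo}$; an SFT cannot count such distances directly, so the paper introduces the diagonal SFT $D$, whose square size is forced, by local synchronisation with the corners of $X_T^2$, to equal the geometric spacing $2^\ell$ of lines of level $\ell$. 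This transfers the distance from dimension 2 into dimension 3 (resp.\ 4) by finite-type means alone. Without an analogue of these two ideas -- or a genuinely new mechanism replacing the rigidity you assume -- the proposal does not yield a correct reduction, and supplying that mechanism is exactly the content of \Cref{sec:sft-highly-undecidable}.
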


As above, we prove $\Sigma_1^1$-hardness for $d=4$, and the result extends to $d>4$.

\subsection{Outline of the proof}

This proof has the same structure as \Cref{th:3d-sofic} with some adaptations for $\Z^4$~SFTs. We reduce to the problem \textbf{SR}: given $\mathcal{M}$ and $q_0$, we create an SFT $X_4$ that contains an aperiodic configuration if and only if $\mathcal{M}$ admits a run from the empty word which visits $q_0$ infinitely often. To do this, we use repeated lines along two dimensions (3 and 4) to break all periods up to a length controlled by a computation embedded in the configuration.

\begin{enumerate}
\item In \Cref{sec:2d-toeplitz}, we build $X_T^2$, a $\Z^2$ version of the Toeplitz structure $X_T$;
\item In \Cref{sec:3d-auxiliary}, we build auxiliary SFTs $X_3^{\uo}$ and $X_3^{\uc}$ (counterparts to $Y_2^{\uo}$ and $Y_2^{\uc}$);
\item In \Cref{sec:4d-final}, we build $X_4$ and prove the reduction.
\end{enumerate}

The main difference is that the finite type case requires an additional dimension to embed computations and some construction lines (dimensions 1 and 2). Remember that the subshift $T_\mathcal{M}$ of $\Z$ Toeplitzified sequences of states of runs of $\mathcal{M}$ is effective ; instead, we define an aperiodic $\Z^2$ version $T_\mathcal{M}^2$ that is sofic and aperiodic. Since $T_\mathcal{M}^2$ is the projection of an aperiodic $\Z^2$ SFT, we then add, as in the previous case,  two additional dimensions in which this SFT can be periodic or aperiodic, since the position of infinite level can break periods uncontrollably along at most one dimension.

Furthermore, to control the length of the vectors being broken, we need to measure distances between lines (as in $Y_3$). With SFTs, copying a distance from one dimension to another can only be done with diagonals. Therefore, instead of \emph{lines} to break periods, we use a more complex \emph{diagonal SFT} $D$ that we embed in slices $(i,\cdot,\cdot,\cdot)$ only on dimensions 2 and 3 (on symbols $\uo)$ or 2 and 4 (on symbols $\uc$). This way, the computation embedded in the first two dimensions can control the length of the broken periodicity vectors.

\subsection{\texorpdfstring{$T_\mathcal{M}^2$}{T\_M\^{}2}: \texorpdfstring{$\Z^2$}{Z\^{}2} Toeplitz corresponding to state sequences of \texorpdfstring{$\mathcal{M}$}{M}}\label{sec:2d-toeplitz}

\paragraph*{Binary Toeplitz structure}

In this section, we use a $\Z^2$ subshift $X_T^2$ on the alphabet $\{\ulc,\urc,\llc,\lrc,\sh,\sv\}$ whose structure is a two-dimensional analog of $X_T$. 

It is defined by the substitution $\sigma_2$ on the alphabet $\{ \gray{\ulc},\gray{\urc},\gray{\llc},\gray{\lrc},\gray{\sh},\gray{\sv}, \noop{\ulc},\noop{\urc},\noop{\llc},\noop{\lrc},\noop{\sh},\noop{\sv}\}$:
\[ \sigma_2 =\left\{\begin{array}{c}
      \gray{\sv}  \mapsto
      \begin{tikzpicture}[baseline={([yshift=-.5ex]current bounding box.center)}]
\draw[step=0.5, dotted, gray] (0,0) grid (1,1);
\node at (0.25, 0.75) {$\noop\sv$};
\node at (0.75, 0.75) {$\gray\sh$};
\node at (0.25, 0.25) {$\gray\sv$};
\node at (0.75, 0.25) {\Large$\gray\llc$};
\end{tikzpicture}
\hspace{1.3cm}
      \gray{\sh}  \mapsto
\begin{tikzpicture}[baseline={([yshift=-.5ex]current bounding box.center)}]
\draw[step=0.5, dotted, gray] (0,0) grid (1,1);
\node at (0.25, 0.75) {$\noop\sh$};
\node at (0.75, 0.75) {$\gray\sh$};
\node at (0.25, 0.25) {$\gray\sv$};
\node at (0.75, 0.25) {\Large$\gray\urc$};
\end{tikzpicture}
\hspace{1.3cm}

  \noop{l} \in \{ \noop{\sh},\noop{\sv} \}  \mapsto
  \begin{tikzpicture}[baseline={([yshift=-.5ex]current bounding box.center)}]
\draw[step=0.5, dotted, gray] (0,0) grid (1,1);
\node at (0.25, 0.75) {$\noop{l}$};
\node at (0.75, 0.75) {$\gray\sh$};
\node at (0.25, 0.25) {$\gray\sv$};
\node at (0.75, 0.25) {\Large$\gray\ulc$};
\end{tikzpicture}
   \\[.5cm]
  \gray{c} \in \{ \gray{\ulc},\gray{\urc},\gray{\lrc},\gray{\llc} \} \mapsto
\begin{tikzpicture}[baseline={([yshift=-.5ex]current bounding box.center)}]

\draw[step=0.5, dotted, gray] (0,0) grid (1,1);
\node at (0.25, 0.75) {$\noop{c}$};
\node at (0.75, 0.75) {$\gray\sh$};
\node at (0.25, 0.25) {$\gray\sv$};
\node at (0.75, 0.25) {\Large$\gray\lrc$};
\end{tikzpicture}
   \hspace{1.3cm}
 \noop{c} \in \{ \noop{\ulc},\noop{\urc},\noop{\llc},\noop{\lrc} \}  \mapsto
\begin{tikzpicture}[baseline={([yshift=-.5ex]current bounding box.center)}]
\draw[step=0.5, dotted, gray] (0,0) grid (1,1);
\node at (0.25, 0.75) {$\noop{c}$};
\node at (0.75, 0.75) {$\gray\sh$};
\node at (0.25, 0.25) {$\gray\sv$};
\node at (0.75, 0.25) {\Large$\gray\ulc$};
\end{tikzpicture}

  \end{array}\right.\]

As before, $X_{\sigma_2}$ is the subshift whose forbidden patterns are all the patterns which do not appear in the configuration $\sigma_2^\omega(\noop{\ulc})$ (any other seed symbol would do).

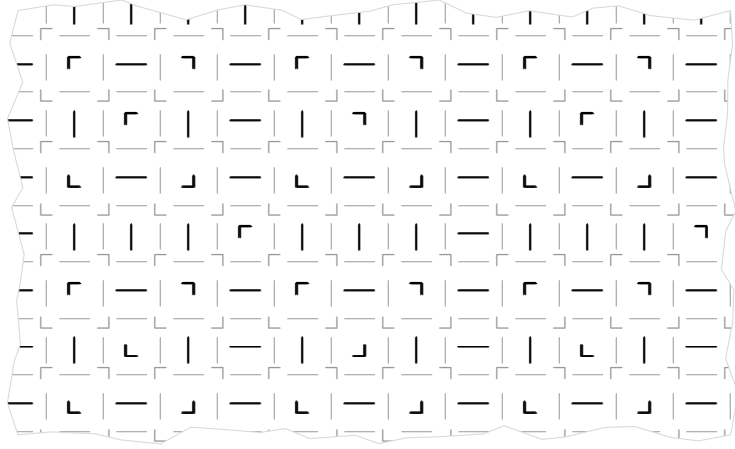
\begin{figure}[t]
  \centering
    \begin{tikzpicture}[scale=0.5, decoration={random steps,segment length=12pt,amplitude=4pt}, scale=0.75]
      \draw[decorate,black!15!white]  [path picture={
\draw (0.5,-0.5) node[black]{\Large $\ulc$};
\draw (1.5,-0.5) node[lipicsBulletGray]{$\sh$};
\draw (2.5,-0.5) node[black]{$\noop\sh$};
\draw (3.5,-0.5) node[lipicsBulletGray]{$\sh$};
\draw (4.5,-0.5) node[black]{$\noop\sh$};
\draw (5.5,-0.5) node[lipicsBulletGray]{$\sh$};
\draw (6.5,-0.5) node[black]{$\noop\sh$};
\draw (7.5,-0.5) node[lipicsBulletGray]{$\sh$};
\draw (8.5,-0.5) node[black]{$\noop\sh$};
\draw (9.5,-0.5) node[lipicsBulletGray]{$\sh$};
\draw (10.5,-0.5) node[black]{$\noop\sh$};
\draw (11.5,-0.5) node[lipicsBulletGray]{$\sh$};
\draw (12.5,-0.5) node[black]{$\noop\sh$};
\draw (13.5,-0.5) node[lipicsBulletGray]{$\sh$};
\draw (14.5,-0.5) node[black]{$\noop\sh$};
\draw (15.5,-0.5) node[lipicsBulletGray]{$\sh$};
\draw (16.5,-0.5) node[black]{$\noop\sh$};
\draw (17.5,-0.5) node[lipicsBulletGray]{$\sh$};
\draw (18.5,-0.5) node[black]{$\noop\sh$};
\draw (19.5,-0.5) node[lipicsBulletGray]{$\sh$};
\draw (20.5,-0.5) node[black]{$\noop\sh$};
\draw (21.5,-0.5) node[lipicsBulletGray]{$\sh$};
\draw (22.5,-0.5) node[black]{$\noop\sh$};
\draw (23.5,-0.5) node[lipicsBulletGray]{$\sh$};
\draw (24.5,-0.5) node[black]{$\noop\sh$};
\draw (25.5,-0.5) node[lipicsBulletGray]{$\sh$};
\draw (0.5,-1.5) node[lipicsBulletGray]{$\sv$};
\draw (1.5,-1.5) node[lipicsBulletGray]{\Large $\ulc$};
\draw (2.5,-1.5) node[lipicsBulletGray]{$\sv$};
\draw (3.5,-1.5) node[lipicsBulletGray]{\Large $\urc$};
\draw (4.5,-1.5) node[lipicsBulletGray]{$\sv$};
\draw (5.5,-1.5) node[lipicsBulletGray]{\Large $\ulc$};
\draw (6.5,-1.5) node[lipicsBulletGray]{$\sv$};
\draw (7.5,-1.5) node[lipicsBulletGray]{\Large $\urc$};
\draw (8.5,-1.5) node[lipicsBulletGray]{$\sv$};
\draw (9.5,-1.5) node[lipicsBulletGray]{\Large $\ulc$};
\draw (10.5,-1.5) node[lipicsBulletGray]{$\sv$};
\draw (11.5,-1.5) node[lipicsBulletGray]{\Large $\urc$};
\draw (12.5,-1.5) node[lipicsBulletGray]{$\sv$};
\draw (13.5,-1.5) node[lipicsBulletGray]{\Large $\ulc$};
\draw (14.5,-1.5) node[lipicsBulletGray]{$\sv$};
\draw (15.5,-1.5) node[lipicsBulletGray]{\Large $\urc$};
\draw (16.5,-1.5) node[lipicsBulletGray]{$\sv$};
\draw (17.5,-1.5) node[lipicsBulletGray]{\Large $\ulc$};
\draw (18.5,-1.5) node[lipicsBulletGray]{$\sv$};
\draw (19.5,-1.5) node[lipicsBulletGray]{\Large $\urc$};
\draw (20.5,-1.5) node[lipicsBulletGray]{$\sv$};
\draw (21.5,-1.5) node[lipicsBulletGray]{\Large $\ulc$};
\draw (22.5,-1.5) node[lipicsBulletGray]{$\sv$};
\draw (23.5,-1.5) node[lipicsBulletGray]{\Large $\urc$};
\draw (24.5,-1.5) node[lipicsBulletGray]{$\sv$};
\draw (25.5,-1.5) node[lipicsBulletGray]{\Large $\ulc$};
\draw (0.5,-2.5) node[black]{$\noop\sv$};
\draw (1.5,-2.5) node[lipicsBulletGray]{$\sh$};
\draw (2.5,-2.5) node[black]{\Large $\noop\ulc$};
\draw (3.5,-2.5) node[lipicsBulletGray]{$\sh$};
\draw (4.5,-2.5) node[black]{$\noop\sv$};
\draw (5.5,-2.5) node[lipicsBulletGray]{$\sh$};
\draw (6.5,-2.5) node[black]{\Large $\noop\urc$};
\draw (7.5,-2.5) node[lipicsBulletGray]{$\sh$};
\draw (8.5,-2.5) node[black]{$\noop\sv$};
\draw (9.5,-2.5) node[lipicsBulletGray]{$\sh$};
\draw (10.5,-2.5) node[black]{\Large $\noop\ulc$};
\draw (11.5,-2.5) node[lipicsBulletGray]{$\sh$};
\draw (12.5,-2.5) node[black]{$\noop\sv$};
\draw (13.5,-2.5) node[lipicsBulletGray]{$\sh$};
\draw (14.5,-2.5) node[black]{\Large $\noop\urc$};
\draw (15.5,-2.5) node[lipicsBulletGray]{$\sh$};
\draw (16.5,-2.5) node[black]{$\noop\sv$};
\draw (17.5,-2.5) node[lipicsBulletGray]{$\sh$};
\draw (18.5,-2.5) node[black]{\Large $\noop\ulc$};
\draw (19.5,-2.5) node[lipicsBulletGray]{$\sh$};
\draw (20.5,-2.5) node[black]{$\noop\sv$};
\draw (21.5,-2.5) node[lipicsBulletGray]{$\sh$};
\draw (22.5,-2.5) node[black]{\Large $\noop\urc$};
\draw (23.5,-2.5) node[lipicsBulletGray]{$\sh$};
\draw (24.5,-2.5) node[black]{$\noop\sv$};
\draw (25.5,-2.5) node[lipicsBulletGray]{$\sh$};
\draw (0.5,-3.5) node[lipicsBulletGray]{$\sv$};
\draw (1.5,-3.5) node[lipicsBulletGray]{\Large $\llc$};
\draw (2.5,-3.5) node[lipicsBulletGray]{$\sv$};
\draw (3.5,-3.5) node[lipicsBulletGray]{\Large $\lrc$};
\draw (4.5,-3.5) node[lipicsBulletGray]{$\sv$};
\draw (5.5,-3.5) node[lipicsBulletGray]{\Large $\llc$};
\draw (6.5,-3.5) node[lipicsBulletGray]{$\sv$};
\draw (7.5,-3.5) node[lipicsBulletGray]{\Large $\lrc$};
\draw (8.5,-3.5) node[lipicsBulletGray]{$\sv$};
\draw (9.5,-3.5) node[lipicsBulletGray]{\Large $\llc$};
\draw (10.5,-3.5) node[lipicsBulletGray]{$\sv$};
\draw (11.5,-3.5) node[lipicsBulletGray]{\Large $\lrc$};
\draw (12.5,-3.5) node[lipicsBulletGray]{$\sv$};
\draw (13.5,-3.5) node[lipicsBulletGray]{\Large $\llc$};
\draw (14.5,-3.5) node[lipicsBulletGray]{$\sv$};
\draw (15.5,-3.5) node[lipicsBulletGray]{\Large $\lrc$};
\draw (16.5,-3.5) node[lipicsBulletGray]{$\sv$};
\draw (17.5,-3.5) node[lipicsBulletGray]{\Large $\llc$};
\draw (18.5,-3.5) node[lipicsBulletGray]{$\sv$};
\draw (19.5,-3.5) node[lipicsBulletGray]{\Large $\lrc$};
\draw (20.5,-3.5) node[lipicsBulletGray]{$\sv$};
\draw (21.5,-3.5) node[lipicsBulletGray]{\Large $\llc$};
\draw (22.5,-3.5) node[lipicsBulletGray]{$\sv$};
\draw (23.5,-3.5) node[lipicsBulletGray]{\Large $\lrc$};
\draw (24.5,-3.5) node[lipicsBulletGray]{$\sv$};
\draw (25.5,-3.5) node[lipicsBulletGray]{\Large $\llc$};
\draw (0.5,-4.5) node[black]{$\noop\sv$};
\draw (1.5,-4.5) node[lipicsBulletGray]{$\sh$};
\draw (2.5,-4.5) node[black]{$\noop\sh$};
\draw (3.5,-4.5) node[lipicsBulletGray]{$\sh$};
\draw (4.5,-4.5) node[black]{\Large $\noop\ulc$};
\draw (5.5,-4.5) node[lipicsBulletGray]{$\sh$};
\draw (6.5,-4.5) node[black]{$\noop\sh$};
\draw (7.5,-4.5) node[lipicsBulletGray]{$\sh$};
\draw (8.5,-4.5) node[black]{$\noop\sv$};
\draw (9.5,-4.5) node[lipicsBulletGray]{$\sh$};
\draw (10.5,-4.5) node[black]{$\noop\sh$};
\draw (11.5,-4.5) node[lipicsBulletGray]{$\sh$};
\draw (12.5,-4.5) node[black]{\Large $\noop\urc$};
\draw (13.5,-4.5) node[lipicsBulletGray]{$\sh$};
\draw (14.5,-4.5) node[black]{$\noop\sh$};
\draw (15.5,-4.5) node[lipicsBulletGray]{$\sh$};
\draw (16.5,-4.5) node[black]{$\noop\sv$};
\draw (17.5,-4.5) node[lipicsBulletGray]{$\sh$};
\draw (18.5,-4.5) node[black]{$\noop\sh$};
\draw (19.5,-4.5) node[lipicsBulletGray]{$\sh$};
\draw (20.5,-4.5) node[black]{\Large $\noop\ulc$};
\draw (21.5,-4.5) node[lipicsBulletGray]{$\sh$};
\draw (22.5,-4.5) node[black]{$\noop\sh$};
\draw (23.5,-4.5) node[lipicsBulletGray]{$\sh$};
\draw (24.5,-4.5) node[black]{$\noop\sv$};
\draw (25.5,-4.5) node[lipicsBulletGray]{$\sh$};
\draw (0.5,-5.5) node[lipicsBulletGray]{$\sv$};
\draw (1.5,-5.5) node[lipicsBulletGray]{\Large $\ulc$};
\draw (2.5,-5.5) node[lipicsBulletGray]{$\sv$};
\draw (3.5,-5.5) node[lipicsBulletGray]{\Large $\urc$};
\draw (4.5,-5.5) node[lipicsBulletGray]{$\sv$};
\draw (5.5,-5.5) node[lipicsBulletGray]{\Large $\ulc$};
\draw (6.5,-5.5) node[lipicsBulletGray]{$\sv$};
\draw (7.5,-5.5) node[lipicsBulletGray]{\Large $\urc$};
\draw (8.5,-5.5) node[lipicsBulletGray]{$\sv$};
\draw (9.5,-5.5) node[lipicsBulletGray]{\Large $\ulc$};
\draw (10.5,-5.5) node[lipicsBulletGray]{$\sv$};
\draw (11.5,-5.5) node[lipicsBulletGray]{\Large $\urc$};
\draw (12.5,-5.5) node[lipicsBulletGray]{$\sv$};
\draw (13.5,-5.5) node[lipicsBulletGray]{\Large $\ulc$};
\draw (14.5,-5.5) node[lipicsBulletGray]{$\sv$};
\draw (15.5,-5.5) node[lipicsBulletGray]{\Large $\urc$};
\draw (16.5,-5.5) node[lipicsBulletGray]{$\sv$};
\draw (17.5,-5.5) node[lipicsBulletGray]{\Large $\ulc$};
\draw (18.5,-5.5) node[lipicsBulletGray]{$\sv$};
\draw (19.5,-5.5) node[lipicsBulletGray]{\Large $\urc$};
\draw (20.5,-5.5) node[lipicsBulletGray]{$\sv$};
\draw (21.5,-5.5) node[lipicsBulletGray]{\Large $\ulc$};
\draw (22.5,-5.5) node[lipicsBulletGray]{$\sv$};
\draw (23.5,-5.5) node[lipicsBulletGray]{\Large $\urc$};
\draw (24.5,-5.5) node[lipicsBulletGray]{$\sv$};
\draw (25.5,-5.5) node[lipicsBulletGray]{\Large $\ulc$};
\draw (0.5,-6.5) node[black]{$\noop\sv$};
\draw (1.5,-6.5) node[lipicsBulletGray]{$\sh$};
\draw (2.5,-6.5) node[black]{\Large $\noop\llc$};
\draw (3.5,-6.5) node[lipicsBulletGray]{$\sh$};
\draw (4.5,-6.5) node[black]{$\noop\sv$};
\draw (5.5,-6.5) node[lipicsBulletGray]{$\sh$};
\draw (6.5,-6.5) node[black]{\Large $\noop\lrc$};
\draw (7.5,-6.5) node[lipicsBulletGray]{$\sh$};
\draw (8.5,-6.5) node[black]{$\noop\sv$};
\draw (9.5,-6.5) node[lipicsBulletGray]{$\sh$};
\draw (10.5,-6.5) node[black]{\Large $\noop\llc$};
\draw (11.5,-6.5) node[lipicsBulletGray]{$\sh$};
\draw (12.5,-6.5) node[black]{$\noop\sv$};
\draw (13.5,-6.5) node[lipicsBulletGray]{$\sh$};
\draw (14.5,-6.5) node[black]{\Large $\noop\lrc$};
\draw (15.5,-6.5) node[lipicsBulletGray]{$\sh$};
\draw (16.5,-6.5) node[black]{$\noop\sv$};
\draw (17.5,-6.5) node[lipicsBulletGray]{$\sh$};
\draw (18.5,-6.5) node[black]{\Large $\noop\llc$};
\draw (19.5,-6.5) node[lipicsBulletGray]{$\sh$};
\draw (20.5,-6.5) node[black]{$\noop\sv$};
\draw (21.5,-6.5) node[lipicsBulletGray]{$\sh$};
\draw (22.5,-6.5) node[black]{\Large $\noop\lrc$};
\draw (23.5,-6.5) node[lipicsBulletGray]{$\sh$};
\draw (24.5,-6.5) node[black]{$\noop\sv$};
\draw (25.5,-6.5) node[lipicsBulletGray]{$\sh$};
\draw (0.5,-7.5) node[lipicsBulletGray]{$\sv$};
\draw (1.5,-7.5) node[lipicsBulletGray]{\Large $\llc$};
\draw (2.5,-7.5) node[lipicsBulletGray]{$\sv$};
\draw (3.5,-7.5) node[lipicsBulletGray]{\Large $\lrc$};
\draw (4.5,-7.5) node[lipicsBulletGray]{$\sv$};
\draw (5.5,-7.5) node[lipicsBulletGray]{\Large $\llc$};
\draw (6.5,-7.5) node[lipicsBulletGray]{$\sv$};
\draw (7.5,-7.5) node[lipicsBulletGray]{\Large $\lrc$};
\draw (8.5,-7.5) node[lipicsBulletGray]{$\sv$};
\draw (9.5,-7.5) node[lipicsBulletGray]{\Large $\llc$};
\draw (10.5,-7.5) node[lipicsBulletGray]{$\sv$};
\draw (11.5,-7.5) node[lipicsBulletGray]{\Large $\lrc$};
\draw (12.5,-7.5) node[lipicsBulletGray]{$\sv$};
\draw (13.5,-7.5) node[lipicsBulletGray]{\Large $\llc$};
\draw (14.5,-7.5) node[lipicsBulletGray]{$\sv$};
\draw (15.5,-7.5) node[lipicsBulletGray]{\Large $\lrc$};
\draw (16.5,-7.5) node[lipicsBulletGray]{$\sv$};
\draw (17.5,-7.5) node[lipicsBulletGray]{\Large $\llc$};
\draw (18.5,-7.5) node[lipicsBulletGray]{$\sv$};
\draw (19.5,-7.5) node[lipicsBulletGray]{\Large $\lrc$};
\draw (20.5,-7.5) node[lipicsBulletGray]{$\sv$};
\draw (21.5,-7.5) node[lipicsBulletGray]{\Large $\llc$};
\draw (22.5,-7.5) node[lipicsBulletGray]{$\sv$};
\draw (23.5,-7.5) node[lipicsBulletGray]{\Large $\lrc$};
\draw (24.5,-7.5) node[lipicsBulletGray]{$\sv$};
\draw (25.5,-7.5) node[lipicsBulletGray]{\Large $\llc$};
\draw (0.5,-8.5) node[black]{$\noop\sv$};
\draw (1.5,-8.5) node[lipicsBulletGray]{$\sh$};
\draw (2.5,-8.5) node[black]{$\noop\sh$};
\draw (3.5,-8.5) node[lipicsBulletGray]{$\sh$};
\draw (4.5,-8.5) node[black]{$\noop\sh$};
\draw (5.5,-8.5) node[lipicsBulletGray]{$\sh$};
\draw (6.5,-8.5) node[black]{$\noop\sh$};
\draw (7.5,-8.5) node[lipicsBulletGray]{$\sh$};
\draw (8.5,-8.5) node[black]{\Large $\noop\ulc$};
\draw (9.5,-8.5) node[lipicsBulletGray]{$\sh$};
\draw (10.5,-8.5) node[black]{$\noop\sh$};
\draw (11.5,-8.5) node[lipicsBulletGray]{$\sh$};
\draw (12.5,-8.5) node[black]{$\noop\sh$};
\draw (13.5,-8.5) node[lipicsBulletGray]{$\sh$};
\draw (14.5,-8.5) node[black]{$\noop\sh$};
\draw (15.5,-8.5) node[lipicsBulletGray]{$\sh$};
\draw (16.5,-8.5) node[black]{$\noop\sv$};
\draw (17.5,-8.5) node[lipicsBulletGray]{$\sh$};
\draw (18.5,-8.5) node[black]{$\noop\sh$};
\draw (19.5,-8.5) node[lipicsBulletGray]{$\sh$};
\draw (20.5,-8.5) node[black]{$\noop\sh$};
\draw (21.5,-8.5) node[lipicsBulletGray]{$\sh$};
\draw (22.5,-8.5) node[black]{$\noop\sh$};
\draw (23.5,-8.5) node[lipicsBulletGray]{$\sh$};
\draw (24.5,-8.5) node[black]{\Large $\noop\urc$};
\draw (25.5,-8.5) node[lipicsBulletGray]{$\sh$};
\draw (0.5,-9.5) node[lipicsBulletGray]{$\sv$};
\draw (1.5,-9.5) node[lipicsBulletGray]{\Large $\ulc$};
\draw (2.5,-9.5) node[lipicsBulletGray]{$\sv$};
\draw (3.5,-9.5) node[lipicsBulletGray]{\Large $\urc$};
\draw (4.5,-9.5) node[lipicsBulletGray]{$\sv$};
\draw (5.5,-9.5) node[lipicsBulletGray]{\Large $\ulc$};
\draw (6.5,-9.5) node[lipicsBulletGray]{$\sv$};
\draw (7.5,-9.5) node[lipicsBulletGray]{\Large $\urc$};
\draw (8.5,-9.5) node[lipicsBulletGray]{$\sv$};
\draw (9.5,-9.5) node[lipicsBulletGray]{\Large $\ulc$};
\draw (10.5,-9.5) node[lipicsBulletGray]{$\sv$};
\draw (11.5,-9.5) node[lipicsBulletGray]{\Large $\urc$};
\draw (12.5,-9.5) node[lipicsBulletGray]{$\sv$};
\draw (13.5,-9.5) node[lipicsBulletGray]{\Large $\ulc$};
\draw (14.5,-9.5) node[lipicsBulletGray]{$\sv$};
\draw (15.5,-9.5) node[lipicsBulletGray]{\Large $\urc$};
\draw (16.5,-9.5) node[lipicsBulletGray]{$\sv$};
\draw (17.5,-9.5) node[lipicsBulletGray]{\Large $\ulc$};
\draw (18.5,-9.5) node[lipicsBulletGray]{$\sv$};
\draw (19.5,-9.5) node[lipicsBulletGray]{\Large $\urc$};
\draw (20.5,-9.5) node[lipicsBulletGray]{$\sv$};
\draw (21.5,-9.5) node[lipicsBulletGray]{\Large $\ulc$};
\draw (22.5,-9.5) node[lipicsBulletGray]{$\sv$};
\draw (23.5,-9.5) node[lipicsBulletGray]{\Large $\urc$};
\draw (24.5,-9.5) node[lipicsBulletGray]{$\sv$};
\draw (25.5,-9.5) node[lipicsBulletGray]{\Large $\ulc$};
\draw (0.5,-10.5) node[black]{$\noop\sv$};
\draw (1.5,-10.5) node[lipicsBulletGray]{$\sh$};
\draw (2.5,-10.5) node[black]{\Large $\noop\ulc$};
\draw (3.5,-10.5) node[lipicsBulletGray]{$\sh$};
\draw (4.5,-10.5) node[black]{$\noop\sv$};
\draw (5.5,-10.5) node[lipicsBulletGray]{$\sh$};
\draw (6.5,-10.5) node[black]{\Large $\noop\urc$};
\draw (7.5,-10.5) node[lipicsBulletGray]{$\sh$};
\draw (8.5,-10.5) node[black]{$\noop\sv$};
\draw (9.5,-10.5) node[lipicsBulletGray]{$\sh$};
\draw (10.5,-10.5) node[black]{\Large $\noop\ulc$};
\draw (11.5,-10.5) node[lipicsBulletGray]{$\sh$};
\draw (12.5,-10.5) node[black]{$\noop\sv$};
\draw (13.5,-10.5) node[lipicsBulletGray]{$\sh$};
\draw (14.5,-10.5) node[black]{\Large $\noop\urc$};
\draw (15.5,-10.5) node[lipicsBulletGray]{$\sh$};
\draw (16.5,-10.5) node[black]{$\noop\sv$};
\draw (17.5,-10.5) node[lipicsBulletGray]{$\sh$};
\draw (18.5,-10.5) node[black]{\Large $\noop\ulc$};
\draw (19.5,-10.5) node[lipicsBulletGray]{$\sh$};
\draw (20.5,-10.5) node[black]{$\noop\sv$};
\draw (21.5,-10.5) node[lipicsBulletGray]{$\sh$};
\draw (22.5,-10.5) node[black]{\Large $\noop\urc$};
\draw (23.5,-10.5) node[lipicsBulletGray]{$\sh$};
\draw (24.5,-10.5) node[black]{$\noop\sv$};
\draw (25.5,-10.5) node[lipicsBulletGray]{$\sh$};
\draw (0.5,-11.5) node[lipicsBulletGray]{$\sv$};
\draw (1.5,-11.5) node[lipicsBulletGray]{\Large $\llc$};
\draw (2.5,-11.5) node[lipicsBulletGray]{$\sv$};
\draw (3.5,-11.5) node[lipicsBulletGray]{\Large $\lrc$};
\draw (4.5,-11.5) node[lipicsBulletGray]{$\sv$};
\draw (5.5,-11.5) node[lipicsBulletGray]{\Large $\llc$};
\draw (6.5,-11.5) node[lipicsBulletGray]{$\sv$};
\draw (7.5,-11.5) node[lipicsBulletGray]{\Large $\lrc$};
\draw (8.5,-11.5) node[lipicsBulletGray]{$\sv$};
\draw (9.5,-11.5) node[lipicsBulletGray]{\Large $\llc$};
\draw (10.5,-11.5) node[lipicsBulletGray]{$\sv$};
\draw (11.5,-11.5) node[lipicsBulletGray]{\Large $\lrc$};
\draw (12.5,-11.5) node[lipicsBulletGray]{$\sv$};
\draw (13.5,-11.5) node[lipicsBulletGray]{\Large $\llc$};
\draw (14.5,-11.5) node[lipicsBulletGray]{$\sv$};
\draw (15.5,-11.5) node[lipicsBulletGray]{\Large $\lrc$};
\draw (16.5,-11.5) node[lipicsBulletGray]{$\sv$};
\draw (17.5,-11.5) node[lipicsBulletGray]{\Large $\llc$};
\draw (18.5,-11.5) node[lipicsBulletGray]{$\sv$};
\draw (19.5,-11.5) node[lipicsBulletGray]{\Large $\lrc$};
\draw (20.5,-11.5) node[lipicsBulletGray]{$\sv$};
\draw (21.5,-11.5) node[lipicsBulletGray]{\Large $\llc$};
\draw (22.5,-11.5) node[lipicsBulletGray]{$\sv$};
\draw (23.5,-11.5) node[lipicsBulletGray]{\Large $\lrc$};
\draw (24.5,-11.5) node[lipicsBulletGray]{$\sv$};
\draw (25.5,-11.5) node[lipicsBulletGray]{\Large $\llc$};
\draw (0.5,-12.5) node[black]{$\noop\sv$};
\draw (1.5,-12.5) node[lipicsBulletGray]{$\sh$};
\draw (2.5,-12.5) node[black]{$\noop\sh$};
\draw (3.5,-12.5) node[lipicsBulletGray]{$\sh$};
\draw (4.5,-12.5) node[black]{\Large $\noop\llc$};
\draw (5.5,-12.5) node[lipicsBulletGray]{$\sh$};
\draw (6.5,-12.5) node[black]{$\noop\sh$};
\draw (7.5,-12.5) node[lipicsBulletGray]{$\sh$};
\draw (8.5,-12.5) node[black]{$\noop\sv$};
\draw (9.5,-12.5) node[lipicsBulletGray]{$\sh$};
\draw (10.5,-12.5) node[black]{$\noop\sh$};
\draw (11.5,-12.5) node[lipicsBulletGray]{$\sh$};
\draw (12.5,-12.5) node[black]{\Large $\noop\lrc$};
\draw (13.5,-12.5) node[lipicsBulletGray]{$\sh$};
\draw (14.5,-12.5) node[black]{$\noop\sh$};
\draw (15.5,-12.5) node[lipicsBulletGray]{$\sh$};
\draw (16.5,-12.5) node[black]{$\noop\sv$};
\draw (17.5,-12.5) node[lipicsBulletGray]{$\sh$};
\draw (18.5,-12.5) node[black]{$\noop\sh$};
\draw (19.5,-12.5) node[lipicsBulletGray]{$\sh$};
\draw (20.5,-12.5) node[black]{\Large $\noop\llc$};
\draw (21.5,-12.5) node[lipicsBulletGray]{$\sh$};
\draw (22.5,-12.5) node[black]{$\noop\sh$};
\draw (23.5,-12.5) node[lipicsBulletGray]{$\sh$};
\draw (24.5,-12.5) node[black]{$\noop\sv$};
\draw (25.5,-12.5) node[lipicsBulletGray]{$\sh$};
\draw (0.5,-13.5) node[lipicsBulletGray]{$\sv$};
\draw (1.5,-13.5) node[lipicsBulletGray]{\Large $\ulc$};
\draw (2.5,-13.5) node[lipicsBulletGray]{$\sv$};
\draw (3.5,-13.5) node[lipicsBulletGray]{\Large $\urc$};
\draw (4.5,-13.5) node[lipicsBulletGray]{$\sv$};
\draw (5.5,-13.5) node[lipicsBulletGray]{\Large $\ulc$};
\draw (6.5,-13.5) node[lipicsBulletGray]{$\sv$};
\draw (7.5,-13.5) node[lipicsBulletGray]{\Large $\urc$};
\draw (8.5,-13.5) node[lipicsBulletGray]{$\sv$};
\draw (9.5,-13.5) node[lipicsBulletGray]{\Large $\ulc$};
\draw (10.5,-13.5) node[lipicsBulletGray]{$\sv$};
\draw (11.5,-13.5) node[lipicsBulletGray]{\Large $\urc$};
\draw (12.5,-13.5) node[lipicsBulletGray]{$\sv$};
\draw (13.5,-13.5) node[lipicsBulletGray]{\Large $\ulc$};
\draw (14.5,-13.5) node[lipicsBulletGray]{$\sv$};
\draw (15.5,-13.5) node[lipicsBulletGray]{\Large $\urc$};
\draw (16.5,-13.5) node[lipicsBulletGray]{$\sv$};
\draw (17.5,-13.5) node[lipicsBulletGray]{\Large $\ulc$};
\draw (18.5,-13.5) node[lipicsBulletGray]{$\sv$};
\draw (19.5,-13.5) node[lipicsBulletGray]{\Large $\urc$};
\draw (20.5,-13.5) node[lipicsBulletGray]{$\sv$};
\draw (21.5,-13.5) node[lipicsBulletGray]{\Large $\ulc$};
\draw (22.5,-13.5) node[lipicsBulletGray]{$\sv$};
\draw (23.5,-13.5) node[lipicsBulletGray]{\Large $\urc$};
\draw (24.5,-13.5) node[lipicsBulletGray]{$\sv$};
\draw (25.5,-13.5) node[lipicsBulletGray]{\Large $\ulc$};
\draw (0.5,-14.5) node[black]{$\noop\sv$};
\draw (1.5,-14.5) node[lipicsBulletGray]{$\sh$};
\draw (2.5,-14.5) node[black]{\Large $\noop\llc$};
\draw (3.5,-14.5) node[lipicsBulletGray]{$\sh$};
\draw (4.5,-14.5) node[black]{$\noop\sv$};
\draw (5.5,-14.5) node[lipicsBulletGray]{$\sh$};
\draw (6.5,-14.5) node[black]{\Large $\noop\lrc$};
\draw (7.5,-14.5) node[lipicsBulletGray]{$\sh$};
\draw (8.5,-14.5) node[black]{$\noop\sv$};
\draw (9.5,-14.5) node[lipicsBulletGray]{$\sh$};
\draw (10.5,-14.5) node[black]{\Large $\noop\llc$};
\draw (11.5,-14.5) node[lipicsBulletGray]{$\sh$};
\draw (12.5,-14.5) node[black]{$\noop\sv$};
\draw (13.5,-14.5) node[lipicsBulletGray]{$\sh$};
\draw (14.5,-14.5) node[black]{\Large $\noop\lrc$};
\draw (15.5,-14.5) node[lipicsBulletGray]{$\sh$};
\draw (16.5,-14.5) node[black]{$\noop\sv$};
\draw (17.5,-14.5) node[lipicsBulletGray]{$\sh$};
\draw (18.5,-14.5) node[black]{\Large $\noop\llc$};
\draw (19.5,-14.5) node[lipicsBulletGray]{$\sh$};
\draw (20.5,-14.5) node[black]{$\noop\sv$};
\draw (21.5,-14.5) node[lipicsBulletGray]{$\sh$};
\draw (22.5,-14.5) node[black]{\Large $\noop\lrc$};
\draw (23.5,-14.5) node[lipicsBulletGray]{$\sh$};
\draw (24.5,-14.5) node[black]{$\noop\sv$};
\draw (25.5,-14.5) node[lipicsBulletGray]{$\sh$};
\draw (0.5,-15.5) node[lipicsBulletGray]{$\sv$};
\draw (1.5,-15.5) node[lipicsBulletGray]{\Large $\llc$};
\draw (2.5,-15.5) node[lipicsBulletGray]{$\sv$};
\draw (3.5,-15.5) node[lipicsBulletGray]{\Large $\lrc$};
\draw (4.5,-15.5) node[lipicsBulletGray]{$\sv$};
\draw (5.5,-15.5) node[lipicsBulletGray]{\Large $\llc$};
\draw (6.5,-15.5) node[lipicsBulletGray]{$\sv$};
\draw (7.5,-15.5) node[lipicsBulletGray]{\Large $\lrc$};
\draw (8.5,-15.5) node[lipicsBulletGray]{$\sv$};
\draw (9.5,-15.5) node[lipicsBulletGray]{\Large $\llc$};
\draw (10.5,-15.5) node[lipicsBulletGray]{$\sv$};
\draw (11.5,-15.5) node[lipicsBulletGray]{\Large $\lrc$};
\draw (12.5,-15.5) node[lipicsBulletGray]{$\sv$};
\draw (13.5,-15.5) node[lipicsBulletGray]{\Large $\llc$};
\draw (14.5,-15.5) node[lipicsBulletGray]{$\sv$};
\draw (15.5,-15.5) node[lipicsBulletGray]{\Large $\lrc$};
\draw (16.5,-15.5) node[lipicsBulletGray]{$\sv$};
\draw (17.5,-15.5) node[lipicsBulletGray]{\Large $\llc$};
\draw (18.5,-15.5) node[lipicsBulletGray]{$\sv$};
\draw (19.5,-15.5) node[lipicsBulletGray]{\Large $\lrc$};
\draw (20.5,-15.5) node[lipicsBulletGray]{$\sv$};
\draw (21.5,-15.5) node[lipicsBulletGray]{\Large $\llc$};
\draw (22.5,-15.5) node[lipicsBulletGray]{$\sv$};
\draw (23.5,-15.5) node[lipicsBulletGray]{\Large $\lrc$};
\draw (24.5,-15.5) node[lipicsBulletGray]{$\sv$};
\draw (25.5,-15.5) node[lipicsBulletGray]{\Large $\llc$};}] 
(0.5,-15.5) rectangle (25.5,-0.5);
  \end{tikzpicture}
  \caption{A configuration of $X_{\sigma_2}$.}
\end{figure}

\begin{definition}[binary bi-Toeplitz structure]\label{def:bi-toeplitz}
  $X_T^2$ is the color-forgetting projection of $X_{\sigma_2}$ on the alphabet $\{\ulc ,\urc ,\llc ,\lrc ,\sh ,\sv \}$.
\end{definition}

As the substitution $\sigma_2$ is deterministic, $X_T^2$ is a sofic subshift by \cite[Theorem 4.1]{1989-Mozes}.

In a configuration of $X_T^2$, ignoring symbols $\sh$ and $\sv$:\hypertarget{points-2d-toeplitz}{}
\begin{enumerate}
\item corner symbols $\{\ulc,\urc,\llc,\lrc\}$ can be grouped together to form squares. A square is of level~$\ell$ if its edges have length $2^\ell$. There may exist a single corner in an otherwise blank line or column: we say it is part of a square of infinite level;
\item each line only contains symbols in either $\{\ulc, \urc\}$ or $\{\llc,\lrc\}$, all of the same level. If a line does not contain any corner, its level is said to be infinite;
\item the vertical distance between two consecutive lines of the same level $\ell$ is $2^\ell$, and those lines contain the same symbols.
\end{enumerate}
The corresponding statements hold for columns.

\paragraph*{\texorpdfstring{$T_\mathcal{M}^2$}{T\_M\^{}2}: \texorpdfstring{$\Z^2$}{Z\^{}2} Toeplitzification of sequences of states of $\mathcal{M}$}

\begin{definition}[$\Z^2$ Toeplitzification of a set of sequences]\label{bi-toeplitzification}
Given a set of sequences $A \subseteq \Sigma^\N$, we define the corresponding $\Z^2$ Toeplitzified subshift $T_A^2$ on the alphabet $\Sigma_T = \Sigma \times \{\uo,\uc\} \times \{\ulc,\llc,\urc,\lrc,\sh,\sv\}$ as:
\[ T_A^2 = \left\{ x \in (\Sigma_T)^{\Z^2}:
  \begin{array}{l}
    \pi_{1,2}(x) \in (T_A)^\uparrow, \pi_3(x) \in X_T^2 \\
    \forall i,j \in \Z, \begin{array}{l}
      \pi_3(x_{i,j}) \in \{\ulc,\llc\} \implies \pi_2(x_{i,j}) = \uo \\
      \pi_3(x_{i,j}) \in \{\urc,\lrc\} \implies \pi_2(x_{i,j}) = \uc
      \end{array} 
  \end{array}\right\} \]
\end{definition}

In other words, $T_A^2$ superimposes the structure of a $\Z$ Toeplitzification with the $\Z^2$ structure $X_T^2$ we define above. As before, we denote $T_\mathcal{M}^2 := T_{S_\mathcal{M}}^2$ where $S_\mathcal{M}$ of the set of sequences of states corresponding to runs of $\mathcal M$.

\begin{lemma}\label{bitoeplitz-sofic}
  $T_\mathcal{M}^2$ is a non-empty sofic subshift.
\end{lemma}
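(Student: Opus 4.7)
I would prove soficness and non-emptiness separately.

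For soficness, I would observe that $T_\mathcal{M}^2$ is obtained, inside the product alphabet $\Sigma_T$, by intersecting three ingredients: a $\Z^2$-lift of $T_\mathcal{M}$ on layers $1$--$2$, a configuration of $X_T^2$ on layer $3$, and the finite local constraints from \cref{bi-toeplitzification} linking the Toeplitz symbol on layer $2$ with the corner type on layer $3$. By \cref{lemma:tm-effective}, $T_\mathcal{M}$ is $\Z$-effective, so \cref{thold:effective-sofic} yields that its $\Z^2$-lift is sofic. The subshift $X_T^2$ is sofic by \cite[Theorem 4.1]{1989-Mozes}, as already noted. Since soficness is preserved by Cartesian products and by intersection with an SFT condition (one can pull the condition back to the underlying SFT cover and re-project), $T_\mathcal{M}^2$ is sofic.

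For non-emptiness, I would first assume $S_\mathcal{M} \neq \emptyset$; otherwise, modify $\mathcal{M}$ by adding a dummy infinite branch that never visits $q_0$, which does not affect its membership in \textbf{SR}. Pick a state sequence $(a_\ell)_\ell \in S_\mathcal{M}$ and the canonical configuration $\zeta \in X_T^2$ arising as a limit of $\sigma_2^n(\noop{\ulc})$. Define $z \in \{\uo,\uc\}^\Z$ by recording for each column of $\zeta$ whether it contains left-corners ($\ulc,\llc$) or right-corners ($\urc,\lrc$); the structure of $\sigma_2$ ensures that $z \in X_T$ and that the level of column $i$ in $\zeta$ equals $\level_z(i)$. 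Set layer $3$ to $\zeta$, layer $2$ to the $\Z^2$-lift of $z$, and layer $1$ to the lift of the sequence whose value at a position of finite level $\ell$ is $a_\ell$ (and anything at the possible infinite-level position). All conditions from \cref{bi-toeplitzification} are then met: $\pi_{1,2}(x) \in (T_\mathcal{M})^\uparrow$ by construction, $\pi_3(x) = \zeta \in X_T^2$, and the corner-Toeplitz linking conditions hold because the column-type of $\zeta$ agrees with the value of $z$ on that column.

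The main obstacle is the alignment step underlying the non-emptiness argument: confirming that the column-type projection of $X_T^2$ realises $X_T$ with matching levels. This reduces to inspecting the action of $\sigma_2$ on columns (modulo the $\sh,\sv$ bookkeeping), which reproduces the $1$-dimensional substitution $\sigma$ defining $X_T$; once granted, the remaining verifications are routine.
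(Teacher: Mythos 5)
Your proof follows essentially the same route as the paper's: for soficness, both decompose $T_\mathcal{M}^2$ into the lift of the effective subshift $T_\mathcal{M}$ (sofic by \Cref{lemma:tm-effective} and \Cref{thold:effective-sofic}), the sofic subshift $X_T^2$, and the finite-type constraint synchronizing layers 2 and 3; for non-emptiness, both observe that arrows of level $\ell$ in $T_\mathcal{M}$ align with columns of level $\ell$ in $X_T^2$ because both have period $2^\ell$, which your construction simply spells out. The one point where you go beyond the paper is in noticing that $S_\mathcal{M}$ may be empty (if every run of $\mathcal{M}$ halts, which by K\"onig's lemma makes $S_n=\emptyset$ for large $n$), in which case $T_\mathcal{M}$ and hence $T_\mathcal{M}^2$ are empty and the ``non-empty'' part of the lemma fails as literally stated; your preprocessing fix is a legitimate normalization of the \textbf{SR} instance, and in any case the overall reduction does not actually require non-emptiness, since an empty $X_4$ correctly certifies that $(\mathcal{M},q_0)\notin\textbf{SR}$.
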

\begin{proof}$T_\mathcal{M}^2$ is non-empty as arrows in $T_\mathcal{M}$ can be aligned with corners in $X_T^2$: arrows of level $\ell$ as well as columns of level $\ell$ have period $2^\ell$. For~soficness:
  \begin{itemize}
  \item Layers 1 and 2 are $\Z^2$ lifts of configurations of $T_\mathcal{M}$, which is an effective subshift (\Cref{lemma:tm-effective}). By \Cref{thold:effective-sofic}, Layers 1 and 2 form a sofic subshift;
  \item Layer 3 is composed of configurations of $X_T^2$, which is a sofic subshift;
  \item the additional condition defining $T_\mathcal{M}^2$ (synchronizing Layers 2 and 3) is of finite type.\qedhere
  \end{itemize}\end{proof}

\subsection{Auxiliary \texorpdfstring{$\Z^2$}{Z\^{}2} and \texorpdfstring{$\Z^3$}{Z\^{}3} SFTs}\label{sec:3d-auxiliary}

\paragraph*{The diagonal SFT $D$}

The diagonal SFT $D$ is defined by adjacent matching patterns on the alphabet $\Sigma_D$:

\begin{center}
\begin{tikzpicture}
  \fill[lipicsBulletGray] (0,0) rectangle (1,1);
  \fill[white] (0,1) -- (0.5,0.5) -- (0,0.5) -- cycle;
  \fill[white] (0.5,0.5) rectangle (1,1);
  \fill[white] (0.5,0.5) -- (1,0) -- (0.5,0) -- cycle;
  \draw[ultra thick] (0,0.5) -- (1,0.5);
  \draw[ultra thick] (0.5,0) -- (0.5,1);
  \draw[lipicsBulletGray] (0,0) rectangle (1,1);

  \fill[lipicsBulletGray] (2,0) rectangle (2.5,1);
  \fill[white] (2.5,0) rectangle (3,1);
  \draw[ultra thick] (2.5,0) -- (2.5,1);
   \draw[lipicsBulletGray] (2,0) rectangle (3,1);

  \fill[lipicsBulletGray] (4,0) rectangle (5,0.5);
  \fill[white] (4,0.5) rectangle (5,1);
  \draw[ultra thick] (4,0.5) -- (5,0.5);
  \draw[lipicsBulletGray] (4,0) rectangle (5,1);

  \fill[white] (6,1) -- (7,0) -- (6,0) -- cycle;
  \fill[lipicsBulletGray] (6,1) -- (7,1) -- (7,0) -- cycle;
  \draw[lipicsBulletGray] (6,0) rectangle (7,1);

  \fill[white] (8,0) rectangle (9,1);
  \draw[lipicsBulletGray] (8,0) rectangle (9,1);

  \filldraw[lipicsBulletGray] (10,0) rectangle (11,1);
\end{tikzpicture}
\end{center}

A configuration of $D$ containing two parallel black lines is in fact periodic and consists of repeated squares. $D$ also contains configurations with $0$ or $1$ black line.

\begin{figure}[t]
  \centering
  \begin{tikzpicture}[scale = 0.5, decoration={random steps,segment length=12pt,amplitude=4pt}]
\draw[decorate,black!15!white]  [path picture={
\fill[white] (-2.5,-2.5) -- (2.5,-2.5) -- (-2.5,2.5) -- cycle;
\fill[lipicsBulletGray] (-2.5,2.5) -- (2.5,-2.5) -- (2.5,2.5) -- cycle;
\fill[white] (-2.5,2.5) -- (2.5,2.5) -- (-2.5,7.5) -- cycle;
\fill[lipicsBulletGray] (-2.5,7.5) -- (2.5,2.5) -- (2.5,7.5) -- cycle;
\fill[white] (-2.5,7.5) -- (2.5,7.5) -- (-2.5,12.5) -- cycle;
\fill[lipicsBulletGray] (-2.5,12.5) -- (2.5,7.5) -- (2.5,12.5) -- cycle;
\fill[white] (2.5,-2.5) -- (7.5,-2.5) -- (2.5,2.5) -- cycle;
\fill[lipicsBulletGray] (2.5,2.5) -- (7.5,-2.5) -- (7.5,2.5) -- cycle;
\fill[white] (2.5,2.5) -- (7.5,2.5) -- (2.5,7.5) -- cycle;
\fill[lipicsBulletGray] (2.5,7.5) -- (7.5,2.5) -- (7.5,7.5) -- cycle;
\fill[white] (2.5,7.5) -- (7.5,7.5) -- (2.5,12.5) -- cycle;
\fill[lipicsBulletGray] (2.5,12.5) -- (7.5,7.5) -- (7.5,12.5) -- cycle;
\fill[white] (7.5,-2.5) -- (12.5,-2.5) -- (7.5,2.5) -- cycle;
\fill[lipicsBulletGray] (7.5,2.5) -- (12.5,-2.5) -- (12.5,2.5) -- cycle;
\fill[white] (7.5,2.5) -- (12.5,2.5) -- (7.5,7.5) -- cycle;
\fill[lipicsBulletGray] (7.5,7.5) -- (12.5,2.5) -- (12.5,7.5) -- cycle;
\fill[white] (7.5,7.5) -- (12.5,7.5) -- (7.5,12.5) -- cycle;
\fill[lipicsBulletGray] (7.5,12.5) -- (12.5,7.5) -- (12.5,12.5) -- cycle;
\fill[white] (12.5,-2.5) -- (17.5,-2.5) -- (12.5,2.5) -- cycle;
\fill[lipicsBulletGray] (12.5,2.5) -- (17.5,-2.5) -- (17.5,2.5) -- cycle;
\fill[white] (12.5,2.5) -- (17.5,2.5) -- (12.5,7.5) -- cycle;
\fill[lipicsBulletGray] (12.5,7.5) -- (17.5,2.5) -- (17.5,7.5) -- cycle;
\fill[white] (12.5,7.5) -- (17.5,7.5) -- (12.5,12.5) -- cycle;
\fill[lipicsBulletGray] (12.5,12.5) -- (17.5,7.5) -- (17.5,12.5) -- cycle;
\draw[black!15!white] (0,0) grid (17,10);
\draw[black,xshift=-2.5cm,yshift=-2.5cm,step=5,ultra thick] (-0.5,-0.5) grid (19.5,14.5);
}]
(0.5,0.5) -- (15.5,0.5) -- (15.5,9.5) -- (0.5,9.5) -- cycle;
\end{tikzpicture}

  \caption{A configuration of $D$.}
\end{figure}
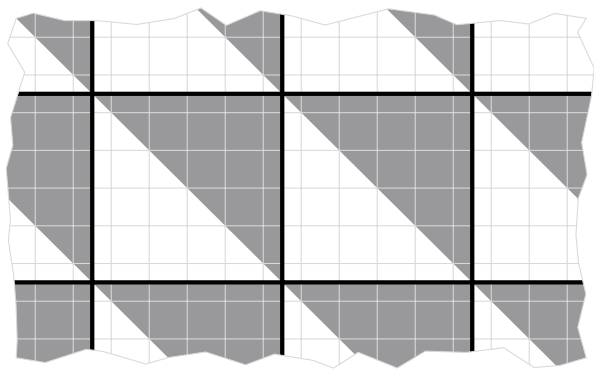

\paragraph*{\texorpdfstring{$\Z^3$}{Z\^{}3} SFTs $X_3^\uo$ and $X_3^\uc$}

This section is written for $X_3^\uo$; it applies to $X_3^\uc$ by flipping the arrows and corners. As $T_\mathcal{M}^2$ is a 3-layer sofic subshift, it is the projection of some 4-layer $\Z^2$ SFT $X_2$. To create $X_3^\uo$, we lift $X_2$ then add an additional layer for $D$:
\begin{itemize}
\item Layers 1 to 4: $\Z^3$ lift of $X_2$. In other words, $\pi_{1,2,3}(X_2)$ is a $\Z^3$ lift of $T_\mathcal{M}^2$;
\item Layer 5 : each slice $(i,\cdot,\cdot)$ contains a configuration $d^i \in D$. If the slice has $(q_0,\uo)$ on its first two layers, then the configuration $d^i$ is ``synchronized'' with the underlying configuration of $T_\mathcal{M}^2$ on Layer 3, that is: vertical lines
  \begin{tikzpicture}[baseline=0.3em]
    \draw[very thick] (0.2,0) -- (0.2,0.4);
  \end{tikzpicture}
  in $d^i$ only appear on Layer 5 at positions marked by corners $\ulc,\llc$ on Layer 3. Otherwise, the slice on Layer 5 is left blank.
\end{itemize}

\begin{figure}[t]
  \centering
  \begin{tikzpicture}[scale = 0.8, x={(-45:1cm)},y={(195:1cm)},z={(90:1cm)}]
  \begin{scope}[canvas is xy plane at z=0]
    \clip (-0.2,0.2) rectangle (5.2,-9.2);
    \draw[gray] (-1,1) grid (6,-10);
  \end{scope}
  \begin{scope}[canvas is yz plane at x=1.5,yshift=-1cm]
    \clip (0.2,1) rectangle (-9.2,6);
    \begin{scope}
      \fill[white] (1.5,-2.5) -- (-2.5,-2.5) -- (1.5,1.5) -- cycle;
      \fill[lipicsBulletGray] (1.5,1.5) -- (-2.5,-2.5) -- (-2.5,1.5) -- cycle;
      \fill[white] (1.5,1.5) -- (-2.5,1.5) -- (1.5,5.5) -- cycle;
      \fill[lipicsBulletGray] (1.5,5.5) -- (-2.5,1.5) -- (-2.5,5.5) -- cycle;
      \fill[white] (1.5,5.5) -- (-2.5,5.5) -- (1.5,9.5) -- cycle;
      \fill[lipicsBulletGray] (1.5,9.5) -- (-2.5,5.5) -- (-2.5,9.5) -- cycle;
      \fill[white] (-2.5,-2.5) -- (-6.5,-2.5) -- (-2.5,1.5) -- cycle;
      \fill[lipicsBulletGray] (-2.5,1.5) -- (-6.5,-2.5) -- (-6.5,1.5) -- cycle;
      \fill[white] (-2.5,1.5) -- (-6.5,1.5) -- (-2.5,5.5) -- cycle;
      \fill[lipicsBulletGray] (-2.5,5.5) -- (-6.5,1.5) -- (-6.5,5.5) -- cycle;
      \fill[white] (-2.5,5.5) -- (-6.5,5.5) -- (-2.5,9.5) -- cycle;
      \fill[lipicsBulletGray] (-2.5,9.5) -- (-6.5,5.5) -- (-6.5,9.5) -- cycle;
      \fill[white] (-6.5,-2.5) -- (-10.5,-2.5) -- (-6.5,1.5) -- cycle;
      \fill[lipicsBulletGray] (-6.5,1.5) -- (-10.5,-2.5) -- (-10.5,1.5) -- cycle;
      \fill[white] (-6.5,1.5) -- (-10.5,1.5) -- (-6.5,5.5) -- cycle;
      \fill[lipicsBulletGray] (-6.5,5.5) -- (-10.5,1.5) -- (-10.5,5.5) -- cycle;
      \fill[white] (-6.5,5.5) -- (-10.5,5.5) -- (-6.5,9.5) -- cycle;
      \draw[black!15!white] (1,-3) grid (-10,6);
      \draw[black,xshift=-2.5cm,yshift=-2.5cm,step=4,ultra thick] (5,-3) grid (-10,12);
    \end{scope}
  \end{scope}
  \begin{scope}[opacity=0.3,canvas is yz plane at x=1.5,yshift=-1cm]
    \clip (0.2,-1) rectangle (-9.2,1);
      \fill[white] (1.5,-2.5) -- (-2.5,-2.5) -- (1.5,1.5) -- cycle;
      \fill[lipicsBulletGray] (1.5,1.5) -- (-2.5,-2.5) -- (-2.5,1.5) -- cycle;
      \fill[white] (1.5,1.5) -- (-2.5,1.5) -- (1.5,5.5) -- cycle;
      \fill[lipicsBulletGray] (1.5,5.5) -- (-2.5,1.5) -- (-2.5,5.5) -- cycle;
      \fill[white] (1.5,5.5) -- (-2.5,5.5) -- (1.5,9.5) -- cycle;
      \fill[lipicsBulletGray] (1.5,9.5) -- (-2.5,5.5) -- (-2.5,9.5) -- cycle;
      \fill[white] (-2.5,-2.5) -- (-6.5,-2.5) -- (-2.5,1.5) -- cycle;
      \fill[lipicsBulletGray] (-2.5,1.5) -- (-6.5,-2.5) -- (-6.5,1.5) -- cycle;
      \fill[white] (-2.5,1.5) -- (-6.5,1.5) -- (-2.5,5.5) -- cycle;
      \fill[lipicsBulletGray] (-2.5,5.5) -- (-6.5,1.5) -- (-6.5,5.5) -- cycle;
      \fill[white] (-2.5,5.5) -- (-6.5,5.5) -- (-2.5,9.5) -- cycle;
      \fill[lipicsBulletGray] (-2.5,9.5) -- (-6.5,5.5) -- (-6.5,9.5) -- cycle;
      \fill[white] (-6.5,-2.5) -- (-10.5,-2.5) -- (-6.5,1.5) -- cycle;
      \fill[lipicsBulletGray] (-6.5,1.5) -- (-10.5,-2.5) -- (-10.5,1.5) -- cycle;
      \fill[white] (-6.5,1.5) -- (-10.5,1.5) -- (-6.5,5.5) -- cycle;
      \fill[lipicsBulletGray] (-6.5,5.5) -- (-10.5,1.5) -- (-10.5,5.5) -- cycle;
      \fill[white] (-6.5,5.5) -- (-10.5,5.5) -- (-6.5,9.5) -- cycle;
      \draw[black!15!white] (1,-3) grid (-10,6);
      \draw[black,xshift=-2.5cm,yshift=-2.5cm,step=4,ultra thick] (5,-3) grid (-10,12);
  \end{scope}
  \begin{scope}[blue,opacity=0.2,canvas is xy plane at z=0, transform shape,xshift=-0.85cm]
    \clip (0.85,0.2) rectangle (2.2,-9.2);
    \draw (0.5,-0.5) node[]{\Huge $\ulc$};
    \draw (1.5,-1.5) node[]{\Huge $\ulc$};
    \draw (1.5,-3.5) node[]{\Huge $\llc$};
    \draw (1.5,-5.5) node[]{\Huge $\ulc$};
    \draw (1.5,-7.5) node[]{\Huge $\llc$};
  \end{scope}
  \begin{scope}[blue,canvas is xy plane at z=0, transform shape,xshift=-0.85cm]
    \clip (2.2,0.2) rectangle (5.85,-9.2);
    \draw (3.5,-1.5) node[]{\Huge $\urc$};
    \draw (5.5,-1.5) node[]{\Huge $\ulc$};
    \draw (7.5,-1.5) node[]{\Huge $\urc$};
    \draw (2.5,-2.5) node[]{\Huge $\ulc$};
    \draw (6.5,-2.5) node[]{\Huge $\urc$};
    \draw (3.5,-3.5) node[]{\Huge $\lrc$};
    \draw (5.5,-3.5) node[]{\Huge $\llc$};
    \draw (7.5,-3.5) node[]{\Huge $\lrc$};
    \draw (4.5,-4.5) node[]{\Huge $\ulc$};
    \draw (3.5,-5.5) node[]{\Huge $\urc$};
    \draw (5.5,-5.5) node[]{\Huge $\ulc$};
    \draw (7.5,-5.5) node[]{\Huge $\urc$};
    \draw (2.5,-6.5) node[]{\Huge $\llc$};
    \draw (6.5,-6.5) node[]{\Huge $\lrc$};
    \draw (3.5,-7.5) node[]{\Huge $\lrc$};
    \draw (5.5,-7.5) node[]{\Huge $\llc$};
    \draw (7.5,-7.5) node[]{\Huge $\lrc$};
  \end{scope}
      \draw[opacity=0.3] (.7,2.1,0) node{$\ell = 1\ (\cdot,\uo)$};
      \draw (1.7,2.1,0) node{$\ell = 2\ (q_0,\uo)$};
      \draw[opacity=0.3] (2.7,2.1,0) node{$\ell = 1\ (\cdot,\uc)$}; 
      \draw[opacity=0.3] (3.7,2.1,0) node{$\ell = 3\ (\cdot,\uo)$}; 
      \draw[opacity=0.3] (4.7,2.1,0) node{$\ell = 1\ (\cdot,\uo)$};
      \draw[canvas is xz plane at y=-1.5, gray, ->] (-4,0) -- ++(0,1) node[at end, above, gray] {$k$};
      \draw[canvas is xz plane at y=-1.5, gray, ->] (-4,0) -- ++(1,0) node[at end,below, gray] {$i$};
      \draw[canvas is yz plane at x=-4, gray, ->] (-1.5,0) -- ++(-1,0) node[at end,above, gray] {$j$};
\end{tikzpicture}

  \caption{A configuration of $X_3^\uo$. The horizontal plane contains a configuration $x\in X_T^2$, and the slice of level 2 marked by $(q_0,\uo)$ contains a configuration of $D$ ``synchronised'' with the squares of $x$.\label{fig:f3o}}
\end{figure}
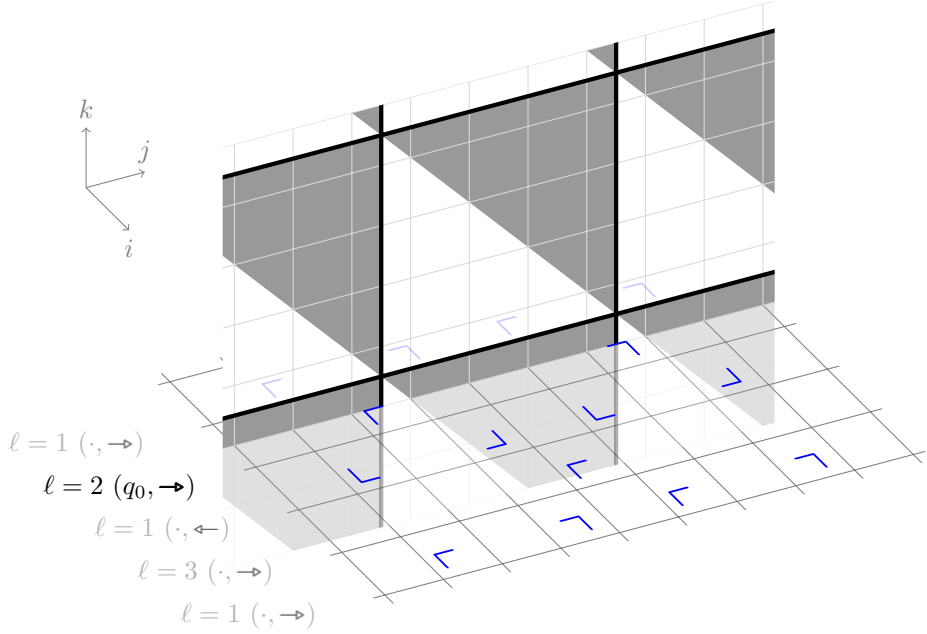

See \Cref{fig:f3o} for a visual help. Formally, if $\Sigma_X$ and $\Sigma_D$ are the alphabets of $X_2$ and $D$:
\[X_3^\uo = 
  \left\{ 
  x \in {\left(\Sigma_X \times \Sigma_D \right)}^{\Z^3} :\: 
  \begin{array}{l} \exists y \in X_2,\ \forall i,j, k \in \Z,\ \pi_{1,2,3,4}(x_{i,j,k}) = y_{i,j} \\
   \forall i\in\Z,\ \exists d^i \in D,\ \forall j,k \in \Z,\ \pi_5(x_{i,j,k}) = d^i_{j,k} \\
   \forall i,j,k \in \Z,\begin{array}{l} \pi_{1,2}(x_{i,j,k}) \neq (q_0,\uo) \implies \pi_5(x_{i,j,k}) = \square \\
   \pi_5(x_{i,j,k}) \in\{ \begin{tikzpicture}[baseline=0.3em]
    \fill[lipicsBulletGray] (0,0) rectangle (0.4,0.4);
    \fill[white] (0,0.2) -- (0.2,0.2) -- (0,0.4) -- cycle;
    \fill[white] (0.2,0.2) rectangle (0.4,0.4);
    \fill[white] (0.2,0.2) -- (0.4,0) -- (0.2,0) -- cycle;
    \draw[very thick] (0,0.2) -- (0.4,0.2);
    \draw[very thick] (0.2,0) -- (0.2,0.4);
    \draw[lipicsBulletGray] (0,0) rectangle (.4, .4);
  \end{tikzpicture},\begin{tikzpicture}[baseline=0.3em]
    \fill[white] (0,0) rectangle (0.2,0.4);
    \fill[lipicsBulletGray] (0.2,0) rectangle (0.4,0.4);
    \draw[very thick] (0.2,0) -- (0.2,0.4);
     \draw[lipicsBulletGray] (0,0) rectangle (.4, .4);
  \end{tikzpicture}\}
  \iff \pi_3(x_{i,j,k}) \in \{ \ulc,\llc\} \end{array}
\end{array}\right\}
\]

As there exists at most a single square of infinite level in $X_T^2$, there exists at most a single slice $(i,\cdot,\cdot)$ of infinite level in $X_3^{\uo}$.

On every slice marked by $\uo$, the configuration of $D$ breaks all periods smaller than its squares, and the size of the squares is controlled by the level of the slice in $T_\mathcal{M}$. Therefore:

\begin{claim}\label{claim:x3-period-breaker}
A configuration of $X_3^{\uo}$:
\begin{enumerate}
\item breaks every periodicity vector $(n,\cdot,\cdot)$ and $(\cdot,n,\cdot)$ for $n \geq 1$.
\item for every slice $(i,\cdot,\cdot)$ containing $(q_0,\uo)$ on the first two layers and corresponding to a column of level $\ell$ in $X_T^2$, Layer 5 breaks every vector $(\cdot,\cdot,n)$ for $1 \leq n < 2^\ell$.
\end{enumerate}
\end{claim}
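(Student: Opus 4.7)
The plan is a case analysis on the shift vector, using a different layer to witness each period-breaking.

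For part~(1), first consider a vector $(n,a,b)$ with $n\ge 1$. Layer~2 is a $\Z^3$-lift of the aperiodic Toeplitz $\Z$-subshift $X_T$ (inherited from $T_\mathcal{M}$), depends only on $i$, and takes aperiodic values along this coordinate, so some $i_0$ has differing Layer-2 values at $i_0$ and $i_0+n$, breaking the vector. Now consider a vector $(0,n,b)$ with $n\ge 1$: Layer~3 is a $\Z^3$-lift of $X_T^2$ in the $k$-direction and depends only on $(i,j)$. Every configuration of $X_T^2$ contains columns of arbitrarily large finite level (by the hierarchical structure of $\sigma_2$); picking any column $i$ of level $\ell$ with $2^\ell>n$, its $\ulc$/$\llc$ corners are vertically spaced by $2^\ell$, and shifting by $n<2^\ell$ maps a corner position to a non-corner position and breaks the vector on Layer~3.

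For part~(2), consider a slice $(i,\cdot,\cdot)$ marked by $(q_0,\uo)$ with Layer-3 column $i$ of level $\ell$. The synchronization constraint of $T_\mathcal{M}^2$ (propagated to $X_3^\uo$ via the SFT cover $X_2$) forces every corner in column~$i$ of Layer~3 to be a left corner ($\ulc$ or $\llc$). Hence the Layer-5 constraint in $X_3^\uo$ places a vertical line of the slice configuration $d^i\in D$ at each of these $j$-positions, yielding infinitely many parallel vertical lines spaced $2^\ell$ apart. By the rigidity of $D$ (any configuration with two parallel vertical lines tiles with squares), $d^i$ is forced to consist of $2^\ell\times 2^\ell$ squares, hence is periodic under the lattice $2^\ell\Z\times 2^\ell\Z$. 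For $1\le n<2^\ell$ and any $b$, the shift $(b,n)$ is not a period of this lattice, so Layer~5 breaks the 3D vector $(0,b,n)$ within the slice.

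The main obstacle is the exact bookkeeping of corner spacings: I must verify that in a column of level~$\ell$, the left corners ($\ulc$ and $\llc$ together) appear at vertical spacing precisely~$2^\ell$, so that $D$'s rigidity produces squares of side exactly~$2^\ell$, giving breakage for $n<2^\ell$ and not some other power of two. I also need that every configuration of $X_T^2$ features columns of arbitrarily high finite level (ensuring the part~(1) argument applies uniformly in~$n$). Both checks reduce to tracing the three structural properties listed just after the definition of $X_T^2$ (and ultimately the substitution~$\sigma_2$).
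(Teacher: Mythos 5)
Your argument is correct and follows essentially the same route as the paper's: part~(1) rests on the aperiodicity of the Toeplitz layers (the paper simply invokes aperiodicity of $X_T^2$, hence of $X_2$, for both of the first two directions), and part~(2) on the $2^\ell$-spacing of corners forcing $d^i$ to consist of $2^\ell\times 2^\ell$ squares. The one slip worth tightening is in part~(2): from ``$d^i$ is periodic under the lattice $2^\ell\Z\times 2^\ell\Z$'' one cannot infer that a vector outside that lattice is broken, since periodicity under a sublattice does not exclude further periods; the justification you actually want, already implicit in your setup, is that the horizontal lines of $d^i$ sit at $k$-coordinates spaced exactly $2^\ell$ apart, so any shift $(b,n)$ with $1\le n<2^\ell$ maps a line position to a non-line position regardless of~$b$.
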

\begin{claimproof}
\begin{enumerate}
\item  $X_2$ is aperiodic because $X_T^2$ is also aperiodic, so all vectors $(n,\cdot,\cdot)$ and $(\cdot,n,\cdot)$ are broken for $n \geq 1$.
\item  For a slice $(i,\cdot,\cdot)$ of level $\ell$, the distance along $(0,1,0)$ between two consecutive lines in Layer 3 (ie. in $X_T^2$) is exactly $2^\ell$ (see. \Cref{sec:2d-toeplitz}, \hyperlink{points-2d-toeplitz}{point 3} in the list of properties of configurations in $X_T^2$). So, Layer 5 breaks every smaller period in this direction.\claimqedhere
\end{enumerate}\end{claimproof}

\subsection{\texorpdfstring{$\Z^4$}{Z\^{}4} SFT \texorpdfstring{$X_4$}{X\_4} and proof of the reduction}\label{sec:4d-final}

\paragraph*{Creation of $X_4$}
Similarly to $Y_3$ (\Cref{sec:sofic-y3}), we build $X_4$ by ``fusing together'' $X_3^\uo$ and $X_3^\uc$. Formally:
\begin{align*}
  X_4 = \{x \in {\left(\Sigma_X \times \Sigma_D \times \Sigma_D \right)}^{\Z^4} :\:& \exists x^\uo \in X_3^\uo, \exists x^\uc \in X_3^\uc,\\
  \forall i,j,k,l \in \Z,\; \pi_{1,2,3,4,5}&(x_{i,j,k,l}) = x^\uo_{i,j,l} \text{ and } \pi_{1,2,3,4,6}(x_{i,j,k,l}) = x^\uc_{i,j,k} \}
\end{align*}
\begin{claim}\label{claim:x4-sft}
  $X_4$ is an SFT.
\end{claim}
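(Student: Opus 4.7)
The plan is to realize $X_4$ as an intersection of two $\Z^4$ SFTs, each obtained by lifting one of the auxiliary $\Z^3$ subshifts $X_3^\uo$ or $X_3^\uc$ along an extra coordinate, and then to invoke standard SFT closure properties.

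First, I would check that $X_3^\uo$ and $X_3^\uc$ are themselves $\Z^3$ SFTs. Each is built from the $\Z^2$ SFT $X_2$ by three finite-type operations: a $\Z^3$ lift of $X_2$ onto the first four layers (the ``equality along the third coordinate'' constraint is nearest-neighbour), a slice-wise embedding of the SFT $D$ on the fifth layer (the finite forbidden family of $D$, read as 2D patterns in the $(j,k)$-slices, is a finite 3D family), and the stated single-cell synchronization rules tying layers 1, 2, 3 to layer 5. All three ingredients are finite-type, so $X_3^\uo$ and symmetrically $X_3^\uc$ are SFTs.

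Next, I would define, on the alphabet $\Sigma_X \times \Sigma_D \times \Sigma_D$, two $\Z^4$ subshifts $Z^\uo$ and $Z^\uc$. Set $Z^\uo$ to require that $\pi_{1,2,3,4,5}(x_{i,j,k,l})$ be independent of $k$ and, viewed as a function of $(i,j,l)$, belong to $X_3^\uo$, with layer 6 unconstrained; this is the trivial lift of $X_3^\uo$ along $k$ (extended freely on layer 6), and is an SFT whose forbidden family is the finite forbidden family of $X_3^\uo$ (read in any $k$-slice) together with the finite 2-cell family enforcing $\pi_{1,2,3,4,5}(x_{i,j,k,l}) = \pi_{1,2,3,4,5}(x_{i,j,k+1,l})$. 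Define $Z^\uc$ symmetrically, by lifting $X_3^\uc$ along $l$ and acting on layers 1--4 and 6.

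Unfolding the definition of $X_4$, one sees directly that $X_4 = Z^\uo \cap Z^\uc$: a configuration $x$ is in $X_4$ iff the witness $x^\uo$ can be read from any $k$-slice of $\pi_{1,2,3,4,5}(x)$ and $x^\uc$ from any $l$-slice of $\pi_{1,2,3,4,6}(x)$. The intersection of two SFTs on a common alphabet is again SFT (union the two finite forbidden families), which proves the claim. The only subtle point is the compatibility of the two lifts on the shared layers 1--4: $Z^\uo$ forces them to be constant in $k$ and $Z^\uc$ forces them to be constant in $l$, and since both originate from the same $\Z^2$ SFT $X_2$, the two constraints are automatically consistent. I do not anticipate a real obstacle here; the argument is a routine assembly of standard SFT closure properties (lift along a coordinate, intersection on a common alphabet, finite local synchronization).
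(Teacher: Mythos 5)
Your argument is correct and follows essentially the same route as the paper, which disposes of the claim in one line (``Both $X_3^\uo$ and $X_3^\uc$ are SFTs, since $X_2$ is an SFT''); you simply make explicit the intermediate steps that the paper leaves implicit (lift, slice-wise embedding of $D$, local synchronization, intersection of two lifted SFTs). The extra ``subtle point'' at the end is about consistency rather than the SFT property itself and is not needed for the claim, but it does no harm.
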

\begin{claimproof}
  Both $X_3^\uo$ and $X_3^\uc$ are SFTs, since $X_2$ is an SFT.
\end{claimproof}

\paragraph*{Reduction \textbf{RS} $\leq$ \textbf{AD}}

\begin{lemma}\label{claim:x4-aperiodicity}
  There exists an aperiodic configuration in $X_4$ if and only if there exists a run of $\mathcal{M}$ in which $q_0$ occurs infinitely often.
\end{lemma}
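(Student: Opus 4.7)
The plan is to mirror the proof of Lemma~\ref{claim:y3-aperiodicity}, using the fact that $X_4$ fuses $X_3^\uo$ and $X_3^\uc$ through two different triples of coordinates so that each handles period-breaking in one of the two ``auxiliary'' dimensions. Both copies share the underlying $T_\mathcal{M}^2$ on Layers 1--4 and therefore encode the same run $(s_t)_{t\in\N}$ of $\mathcal M$. Claim~\ref{claim:x3-period-breaker}, applied inside each copy, will be the main workhorse.

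For the sufficient direction, suppose the run visits $q_0$ infinitely often and pick a configuration $y \in X_4$ realising this run. I would show that every nonzero vector $v = (v_1,v_2,v_3,v_4)$ is broken. Vectors with $v_1 \neq 0$ or $v_2 \neq 0$ are broken by the aperiodicity inherited from $X_T^2$ (point~1 of Claim~\ref{claim:x3-period-breaker}, through either embedding). For a vector $(0,0,0,v_4)$ with $v_4 \neq 0$, I would pick a level $\ell$ with $2^\ell > |v_4|$ and $s_\ell = q_0$; the corresponding $(q_0,\uo)$ slice sits at coordinates $(i,j,l)$ in $X_4$ via the $X_3^\uo$-embedding, and its Layer~5 breaks $v$ by point~2 of Claim~\ref{claim:x3-period-breaker}. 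The case $v_3 \neq 0, v_4 = 0$ is dual via $X_3^\uc$ and Layer~6; mixed vectors $(0,0,v_3,v_4)$ with both $v_3 \neq 0$ and $v_4 \neq 0$ are handled by either of the two, since Layer~5 depends on $l$ and Layer~6 on $k$.

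For the necessary direction, suppose $q_0$ is visited only up to some time $N$. The crucial fact is that $T_\mathcal{M}^2$ has at most one column of infinite level, which carries a single marking $(s_\infty, b_\infty) \in Q \times \{\uo,\uc\}$ on Layers 1--2. If $(s_\infty, b_\infty) \neq (q_0, \uc)$ (including the case where no infinite-level column exists), then every $(q_0, \uc)$ slice has finite level at most $N$; in each such slice Layer~6 is $2^N$-periodic in the $k$ direction, and it is blank in every other slice. Since Layers 1--5 do not depend on $k$ (Layers 1--4 are a $k$-lift of $X_2$, Layer 5 comes from the $(i,j,l)$-embedding of $X_3^\uo$), the whole configuration admits the period $(0,0,2^N,0)$. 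The dual case $(s_\infty, b_\infty) = (q_0, \uc)$ yields period $(0,0,0,2^N)$ through Layer~5. In either case $y$ is not aperiodic.

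The main obstacle is the careful bookkeeping of which layer depends on which of the four coordinates of $X_4$, since $X_3^\uo$ and $X_3^\uc$ are embedded through the different triples $(i,j,l)$ and $(i,j,k)$; once this is clear, the argument reduces to combining Claim~\ref{claim:x3-period-breaker} with the pigeonhole-like observation that a single infinite-level column can carry only one of the two markings $\uo$ or $\uc$, so at most one of the two ``dangerous'' coordinates can be broken at arbitrarily large scales.
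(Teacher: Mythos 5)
Your proposal is correct and follows essentially the same approach as the paper's (very terse) proof, which simply points back to Lemma~\ref{claim:y3-aperiodicity} with the observation that Layers~1--4 now break all vectors with nonzero first or second coordinate, while Layers~5 and~6 handle $(0,0,\cdot,\cdot)$. Your write-up supplies exactly the coordinate-tracking the paper leaves implicit: which triple of $X_4$-coordinates feeds each of $X_3^\uo$ and $X_3^\uc$, and how the $(0,1,0)$/$(0,0,1)$ periodicities of Claim~\ref{claim:y3-period-breaker} turn into the $k$/$l$ independences of Layers~5/6. The only small omission is the (easy) observation that in the Toeplitz structure each finite level $\ell$ carries \emph{both} markings $\uo$ and $\uc$, which is what lets you realise $(q_0,\uo)$ and $(q_0,\uc)$ slices at every level where $s_\ell = q_0$; this is needed in the forward direction when you ``pick a level $\ell$ with $2^\ell > |v_4|$ and $s_\ell = q_0$'' and then assume a $(q_0,\uo)$-marked slice of that level exists. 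Otherwise the case analysis in both directions is complete and the use of the single infinite-level column to split $(0,0,2^N,0)$ vs.\ $(0,0,0,2^N)$ is exactly the intended argument.
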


\begin{proof}\hypertarget{prooflem:x4-aperiodicity}
This is the same proof as for \Cref{claim:y3-aperiodicity}, except that vectors along the first two dimensions are broken by the Toeplitz structure on layer 3. Otherwise, Layers 5 and 6 break every vector $(0,0,\cdot,\cdot)$ if and only if the run visits $q_0$ infinitely often.
\end{proof}

This concludes the proof of \Cref{th:4d-sft}.

\section{Complexity and aperiodic configurations}\label{sec:low-complexity}

The \emph{complexity function} of a $\Z^d$ subshift $X$ is $N_X(n) = \# \{ w \in \Sigma^{{\llbracket 0, n-1 \rrbracket}^d} : \exists x \in X, w \sqsubseteq x \}$. In this section, we see that subshifts of high complexity are expected to have aperiodic configurations. 

\begin{definition}[Dimensional entropy]\label{def:entropies}
Define $h_k(X)$, \emph{the entropy of dimension $k$}, as:
\begin{align*}
 h_k(X) & = \limsup_{n \to +\infty} \frac{\log N_X(n)}{n^k} \in [0,+\infty]
\end{align*}
\end{definition}

\cite[Theorem 10]{2018-GHV} proves that a $\Z^2$ subshift $X$ with no aperiodic configurations is almost topologically conjugated to (i.e. ``nearly behaves as'') a $\Z$ subshift of the same type. In particular, $h_1(X)=+\infty$ is only possible when $X$ contains an aperiodic configuration.

\cite[Corollary 13]{2009-Hochman} entails that $h_d(X)>0$ for a $\Z^d$ SFT implies the existence of aperiodic configurations in any dimension. We improve this result as follows:

\begin{proposition}\label{th:low-complexity-and-eac}
Let $X$ be a $\Z^d$ SFT or sofic subshift. If $h_{d-1}(X) = +\infty$, then there exists an aperiodic configuration in $X$.
\end{proposition}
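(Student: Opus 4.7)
My plan is to proceed by contrapositive: assuming $X$ contains no aperiodic configuration, I will show $h_{d-1}(X) < +\infty$. The whole argument hinges on a \emph{Key Lemma}: under these hypotheses there exists $K \in \N$ such that every $x \in X$ admits a non-zero period $p \in \Z^d$ with $\|p\|_\infty \le K$. Once this is granted, $X$ splits as a finite union $X = \bigcup_{0 < \|p\|_\infty \le K} Y_p$ where $Y_p = \{x \in X : \sigma^p x = x\}$.

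From the Key Lemma the entropy estimate is elementary. A $p$-periodic configuration is determined on any window $[0,n-1]^d$ by its restriction to the intersection of that window with a fundamental domain of the cyclic subgroup $\langle p \rangle \subseteq \Z^d$. A direct orbit count shows this intersection has at most $C \|p\|_\infty n^{d-1}$ cells for a dimensional constant $C$, hence $N_{Y_p}(n) \le |\Sigma|^{C \|p\|_\infty n^{d-1}}$. Summing over the finitely many admissible $p$'s,
\[
  N_X(n) \le (2K+1)^d \cdot |\Sigma|^{CK n^{d-1}},
\]
so $h_{d-1}(X) \le CK \log |\Sigma| < +\infty$, as required.

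For the Key Lemma I would apply Baire's theorem to the compact space $X$. The countable cover $X = \bigcup_{p \ne 0} Y_p$ by closed sets forces some $Y_{p_1}$ to have non-empty interior, meaning that some cylinder $[w_1] \cap X$ is contained in $Y_{p_1}$---any configuration of $X$ containing the pattern $w_1$ is automatically $p_1$-periodic. Iterating the same argument on the closed shift-invariant remainder (which inherits the absence of aperiodic configurations from $X$), one extracts a family of ``forcing pairs'' $(w_i,p_i)$ whose cylinders eventually cover $X$. Using the SFT or sofic hypothesis to argue that the procedure terminates after finitely many steps, compactness of $X$ then yields a finite subcover, and the uniform bound $K = \max_i \|p_i\|_\infty$ follows.

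The hard part is precisely the finite-termination of this Baire iteration. A direct alternative---picking, for each $K$, a configuration $x_K \in X$ with no period of norm $\le K$ (which exists by the $h_{d-1}(X) = +\infty$ hypothesis, by counting patterns with some internal period of norm $\le K$ against the total $N_X(n)$), then shift-and-extracting a limit containing simultaneous ``break witnesses'' for every non-zero $p$---is tempting but subtle, because the break positions for distinct periods can drift to infinity as $K$ grows, so the naive limit may still turn out to be periodic. Bridging this gap is exactly where the SFT/sofic structure of $X$ must be used (the local-rule structure concentrates break witnesses in a common finite region), possibly in combination with Hochman's result \cite{2009-Hochman} to reduce to the case $h_d(X) = 0$.
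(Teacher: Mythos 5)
There is a genuine gap, and you already identify it yourself: the whole argument rests on the ``Key Lemma'' (a uniform bound $K$ on minimal periods), and neither the Baire iteration nor the compactness alternative you sketch actually establishes it. The Baire route does give a single forcing pair $(w_1,p_1)$, but the complement of $[w_1]$ in $X$ is not shift-invariant (a shift of a configuration avoiding $w_1$ may contain $w_1$), so ``iterating on the remainder'' is not well-defined as stated; and even after repairing this, nothing forces the process to terminate. The shift-and-extract alternative fails for the reason you note (break witnesses drift off to infinity). More fundamentally, the Key Lemma is a considerably \emph{stronger} statement than the proposition being proved, and it is unclear that it even holds for $\Z^d$ sofic subshifts with $d\geq 3$ -- \cite{2018-GHV} give, in \S4, a counter-example to the related ``local periodicity'' phenomenon outside dimension~$2$, which is precisely why the present paper abandons that line of attack. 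So the proposal reduces the claim to a harder (and possibly false) one.

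The paper's proof avoids the contrapositive entirely and needs no uniform period bound. Take an SFT cover $\pi\colon X'\to X$ with nearest-neighbour constraints. The number of boundary patterns of an $n$-cube admissible in $X'$ is at most $|\Sigma'|^{2dn^{d-1}}$, whereas the number of distinct \emph{projections} of admissible $n$-cube patterns is $N_X(n)$; since $h_{d-1}(X)=+\infty$, for some $n$ the latter strictly exceeds the former. By pigeonhole there is a boundary pattern $b$ admitting two fillings $b^+,b^-$ in $X'$ with $\pi(b^+)\neq\pi(b^-)$. Plant $b^+$ in some $x'\in X'$; if $\pi(x')$ is already aperiodic we are done, and otherwise swapping $b^+$ for $b^-$ at one site is a legal local surgery (the boundaries match) that changes $\pi(x')$ on a finite non-empty set, and a short argument using the periodicity of $\pi(x')$ shows the result is aperiodic. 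This is a direct, non-uniform pigeonhole: it produces one aperiodic configuration without ever classifying the periodic ones. Your counting estimate (given the Key Lemma) is correct, but you should look for a construction of a single aperiodic point rather than a global structure theorem for the periodic part of~$X$.
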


\begin{proof}
Let $X' \subseteq {\Sigma'}^{\Z^d}$ be a SFT cover of $X$: $\pi(X') = X$ for $\pi$ a letter-to-letter projection. W.l.o.g., assume $X'$ is defined by adjacency constraints. Consider all patterns on $\llbracket 0, n-1 \rrbracket^d$ admissible in $X'$ : they have exactly $N_X(n)$ different projections by $\pi$, but the number $N^b_{X'}(n)$ of different patterns on the boundary of $\llbracket 0, n-1 \rrbracket^d$ is at most ${\Sigma'}^{2d n^{d-1}}$. As $\log N_X(n)/n^{d-1}$ is unbounded, there exists some $n$ such that:
  \[ 2d (\log \Sigma') < \frac{\log N_X(n)}{n^{d-1}}\quad \iff \quad{\Sigma'}^{2d n^{d-1}}  < N_X(n) \quad\implies\quad  N^b_{X'}(n) < N_X(n)\]
By the pigeonhole principle, there exists a pattern $b$ on the boundary admissible in $X'$ that can be extended on the cube in two different admissible patterns $b^+, b^-$ such that $\pi(b^+) \neq \pi(b^-)$. Consider a configuration $x' \in X'$ in which $b^+$ appears. If $\pi(x')$ is not already aperiodic, swapping $b^+$ for $b^-$ in $x'$ at a some arbitrary position leads to a configuration $\pi(x'') \in X$ which is aperiodic.
\end{proof}

\Cref{th:low-complexity-and-eac} is tight: the $\Z^{d}$-lift of a $\Z^{d-1}$ SFT ($d \geq 3$) is periodic by definition, and can have arbitrarily high entropy of dimension~$(d-1)$. We conjecture that Proposition~\ref{th:low-complexity-and-eac} holds for all subshifts even in dimension $d>2$.

\Cref{th:low-complexity-and-eac} shows that \textbf{AD} is a problem that is only relevant for low complexity subshifts, which is where its full computational complexity ``lies''. Indeed, the problem of deciding whether $h_{d-1}(X) = +\infty$ is~$\Pi_3^0$ (it is equivalent to $\forall k\ \exists n\ N_X(n)>k$, and $N_X(n)$ is a $\Pi_1^0$ integer), which is much easier than the $\Sigma_1^1$-completeness of \textbf{AD} on $\Z^3$ sofic subshifts.

\section{Open problems}

The main remaining question is, of course, the case of $\Z^3$ SFTs. The method we developed above to prove $\Sigma_1^1$-completeness in the case of $\Z^3$ sofic subshifts cannot be applied. Indeed, embedding computations in an SFT requires at least two aperiodic dimensions; and we need two other dimensions (\hyperlink{par:infinite-level-aper}{because of the positions of level $\infty$}) which can be periodic or aperiodic.

We conjecture that aperiodic configurations in $\Z^3$ SFTs behave similarly as in $\Z^2$ subshifts: each $\Z^3$ SFT containing aperiodic configurations seems to have ``centers'' of aperiodicity, i.e. concentric zones in which periods are broken. The distance from the center might depend on the length of the vector, $|\Sigma|$ and the size of the largest forbidden pattern.

However, there seem to be important differences. First, for $\Z^3$-SFT, not all aperiodic configurations have a center of aperiodicity in their orbit closure: this center may be in an unrelated aperiodic configuration. Second, results of \cite{2018-GHV} are valid for all $\Z^2$ subshifts, but our conjecture must be specific to $\Z^3$ SFTs, and a proof requires SFT-specific techniques. \medskip

Considering subshifts on more general groups (other than $\Z^d$), there is an active research theme looking for conditions on groups which make the Domino problem undecidable. In this context, we would like to obtain conditions that make \textbf{AD} $\Pi_1^0$- or $\Sigma_1^1$-complete.

\bibliographystyle{plainurl}
\bibliography{bibliography}

\begin{thebibliography}{10}

\bibitem{2013-AS}
Nathalie Aubrun and Mathieu Sablik.
\newblock Simulation of effective subshifts by two-dimensional subshifts of
  finite type.
\newblock {\em Acta Applicandae Mathematicae}, 126:35--63, 2013.
\newblock \href {https://doi.org/10.1007/s10440-013-9808-5}
  {\path{doi:10.1007/s10440-013-9808-5}}.

\bibitem{1966-Berger}
Robert Berger.
\newblock {\em The Undecidability of the Domino Problem}.
\newblock Number~66 in Memoirs of the American Mathematical Society. American
  Mathematical Society, 1966.

\bibitem{2016-CombinatoricsWordsandSD}
Val{\'e}rie Berth{\'e} and Michel Rigo.
\newblock {\em Combinatorics, Words and Symbolic Dynamics}.
\newblock Number 159 in Encyclopedia of Mathematics and its Applications.
  Cambridge University Press, 2016.
\newblock \href {https://doi.org/10.1017/CBO9781139924733}
  {\path{doi:10.1017/CBO9781139924733}}.

\bibitem{2021-CV}
Antonin Callard and Pascal Vanier.
\newblock {Computational Characterization of Surface Entropies for {$\Z^2$}
  Subshifts of Finite Type}.
\newblock In {\em 48th International Colloquium on Automata, Languages, and
  Programming (ICALP 2021)}, volume 198 of {\em Leibniz International
  Proceedings in Informatics (LIPIcs)}, pages 122:1--122:20. Schloss Dagstuhl
  -- Leibniz-Zentrum f{\"u}r Informatik, 2021.
\newblock \href {https://doi.org/10.4230/LIPIcs.ICALP.2021.122}
  {\path{doi:10.4230/LIPIcs.ICALP.2021.122}}.

\bibitem{2010-DRS}
Bruno Durand, Andrei~E. Romashchenko, and Alexander Shen.
\newblock Effective closed subshifts in {1D} can be implemented in {2D}.
\newblock In Andreas Blass, Nachum Dershowitz, and Wolfgang Reisig, editors,
  {\em Fields of Logic and Computation, Essays Dedicated to Yuri Gurevich on
  the Occasion of His 70th Birthday}, volume 6300 of {\em Lecture Notes in
  Computer Science}, pages 208--226. Springer, 2010.
\newblock \href {https://doi.org/10.1007/978-3-642-15025-8_12}
  {\path{doi:10.1007/978-3-642-15025-8_12}}.

\bibitem{1997-Friedland}
Shmuel Friedland.
\newblock On the entropy of {$\Z^d$} subshifts of finite type.
\newblock {\em Linear Algebra and its Applications}, 252(1):199--220, 1997.
\newblock \href {https://doi.org/https://doi.org/10.1016/0024-3795(95)00676-1}
  {\path{doi:https://doi.org/10.1016/0024-3795(95)00676-1}}.

\bibitem{2019-GH}
Silv{\`e}re Gangloff and Benjamin Hellouin~de Menibus.
\newblock Effect of quantified irreducibility on the computability of subshift
  entropy.
\newblock {\em Discrete \& Continuous Dynamical Systems - A}, 39(4):1975--2000,
  2019.
\newblock \href {https://doi.org/10.3934/dcds.2019083}
  {\path{doi:10.3934/dcds.2019083}}.

\bibitem{2018-GHV}
Anaël Grandjean, Benjamin Hellouin~de Menibus, and Pascal Vanier.
\newblock Aperiodic points in {$\Z^2$}-subshifts.
\newblock In {\em 45th International Colloquium on Automata, Languages, and
  Programming (ICALP 2018)}, volume 107 of {\em Leibniz International
  Proceedings in Informatics (LIPIcs)}, pages 128:1--128:13. Schloss
  Dagstuhl--Leibniz-Zentrum fuer Informatik, 2018.
\newblock \href {https://doi.org/10.4230/LIPIcs.ICALP.2018.128}
  {\path{doi:10.4230/LIPIcs.ICALP.2018.128}}.

\bibitem{1986-Harel}
David Harel.
\newblock Effective transformations on infinite trees, with applications to
  high undecidability, dominoes, and fairness.
\newblock {\em Journal of the ACM}, 33(1):224--248, 1986.
\newblock \href {https://doi.org/10.1145/4904.4993}
  {\path{doi:10.1145/4904.4993}}.

\bibitem{2008-Hochman}
Michael Hochman.
\newblock On the dynamics and recursive properties of multidimensional symbolic
  systems.
\newblock {\em Inventiones mathematicae}, 176:131--167, 2008.
\newblock \href {https://doi.org/10.1007/s00222-008-0161-7}
  {\path{doi:10.1007/s00222-008-0161-7}}.

\bibitem{2009-Hochman}
Michael Hochman.
\newblock On the automorphism groups of multidimensional shifts of finite type.
\newblock {\em Ergodic Theory and Dynamical Systems}, 30:809--840, 2009.
\newblock \href {https://doi.org/10.1017/S0143385709000248}
  {\path{doi:10.1017/S0143385709000248}}.

\bibitem{2010-HM}
Michael Hochman and Tom Meyerovitch.
\newblock A characterization of the entropies of multidimensional shifts of
  finite type.
\newblock {\em Annals of Mathematics}, 171(3):2011--2038, 2010.
\newblock \href {https://doi.org/10.4007/annals.2010.171.2011}
  {\path{doi:10.4007/annals.2010.171.2011}}.

\bibitem{1969-JK}
Konrad Jacobs and Michael Keane.
\newblock 0--1--sequences of {Toeplitz} type.
\newblock {\em Zeitschrift für Wahrscheinlichkeitstheorie und Verwandte
  Gebiete}, 13(2):123--131, 1969.
\newblock \href {https://doi.org/10.1007/BF00537017}
  {\path{doi:10.1007/BF00537017}}.

\bibitem{2020-UndecidabilityDomino}
Emmanuel Jeandel and Pascal Vanier.
\newblock {\em The Undecidability of the Domino Problem}, pages 293--357.
\newblock Springer, 2020.
\newblock \href {https://doi.org/10.1007/978-3-030-57666-0_6}
  {\path{doi:10.1007/978-3-030-57666-0_6}}.

\bibitem{2011-Meyerovitch}
Tom Meyerovitch.
\newblock Growth-type invariants for {$\Z^d$} subshifts of finite type and
  arithmetical classes of real numbers.
\newblock {\em Inventiones mathematicae}, 184:567--589, 2011.
\newblock \href {https://doi.org/10.1007/s00222-010-0296-1}
  {\path{doi:10.1007/s00222-010-0296-1}}.

\bibitem{1989-Mozes}
Shahar Mozes.
\newblock Tilings, substitution systems and dynamical systems generated by
  them.
\newblock {\em Journal d'Analyse Mathématique}, 53(1):139--186, 1989.
\newblock \href {https://doi.org/10.1007/BF02793412}
  {\path{doi:10.1007/BF02793412}}.

\bibitem{1989-ClassicalRecursionTheory}
Piergiorgio Odifreddi.
\newblock {\em Classical Recursion Theory: The Theory of Functions and Sets of
  Natural Numbers}.
\newblock Elsevier, 1989.

\bibitem{2015-PS}
Ronnie Pavlov and Michael Schraudner.
\newblock Entropies realizable by block gluing {$\Z^d$} shifts of finite type.
\newblock {\em Journal d'Analyse Math{\'e}matique}, 126(1):113--174, 2015.
\newblock \href {https://doi.org/10.1007/s11854-015-0014-4}
  {\path{doi:10.1007/s11854-015-0014-4}}.

\bibitem{1971-Robinson}
Raphael~M. Robinson.
\newblock Undecidability and nonperiodicity for tilings of the plane.
\newblock {\em Inventiones mathematicae}, 12:177--209, 1971.
\newblock \href {https://doi.org/10.1007/BF01418780}
  {\path{doi:10.1007/BF01418780}}.

\bibitem{2016-TuringComputability}
Robert~I. Soare.
\newblock {\em Turing Computability: Theory and Applications}.
\newblock Theory and Applications of Computability. Springer, 1st edition,
  2016.
\newblock \href {https://doi.org/10.1007/978-3-642-31933-4}
  {\path{doi:10.1007/978-3-642-31933-4}}.

\bibitem{2017-Westrick}
Linda~Brown Westrick.
\newblock Seas of squares with sizes from a {$\Pi_1^0$} set.
\newblock {\em Israel Journal of Mathematics}, 222:431--462, 2017.
\newblock \href {https://doi.org/10.1007/s11856-017-1596-6}
  {\path{doi:10.1007/s11856-017-1596-6}}.

\end{thebibliography}
\end{document}